\DeclareMathOperator{\E}{\mathbb{E}}
\crefname{theorem}{Theorem}{Theorems}
\Crefname{lemma}{Lemma}{Lemmas}
\Crefname{claim}{Claim}{Claims}
\Crefname{observation}{Observation}{Observations}
\newtheorem{theorem}{Theorem}
\newtheorem{lemma}[theorem]{Lemma}
\newtheorem*{remark}{Remark}
\renewcommand{\le}{\leqslant}
\renewcommand{\geq}{\geqslant}
\renewcommand{\leq}{\leqslant}
\newcommand{\EE}{\mathbb{E}}
\DeclareMathOperator{\poly}{poly}
\newcommand{\R}{{\mathbb{R}}}
\newcommand{\calM}{\mathcal{M}}
\newcommand{\calA}{\mathcal{A}}
\newcommand{\calE}{\mathcal{E}}
\newcommand{\ev}[1]{E_{#1}}
\newcommand{\lev}[1]{\calE_{#1}}
\newcommand{\allev}{E_{\operatorname{all}}}
\newcommand{\indpick}[1]{\mathbb{I}^{\operatorname{pick}}_{#1}}
\newcommand{\indquery}[1]{\mathbb{I}^{\operatorname{query}}_{#1}}
\newcommand{\mwbm}{\textsc{MWBM}\xspace}
\newcommand{\qc}{query\nobreakdash-\hspace{0pt}commit\xspace}
\newcommand{\adj}[1]{\delta(#1)}
\newcommand{\qcpt}[1]{P^{\operatorname{QC}}_{#1}}
\newcommand{\qcdist}[1]{\mathcal{D}^{\operatorname{QC}}_{#1}}
\newcommand{\qcutil}{\mathcal{U}}
\newcommand{\approxqc}{\texttt{Approx-QC}\xspace}
\newcommand{\qclp}{\operatorname{LP}_{\operatorname{QC}}}
\newcommand{\poiutil}{\mathcal{U}}
\newcommand{\poilp}{\operatorname{LP}_{\operatorname{PoI}}}
\newcommand{\poipt}[1]{P^{\operatorname{PoI}}_{#1}}
\newcommand{\poidist}[1]{\mathcal{D}^{\operatorname{PoI}}_{#1}}
\newcommand{\approxpoi}{\texttt{Approx-PoI}\xspace}
\newcommand{\bI}{\mathbb{I}}
\newcommand{\by}{\mathbf{y}}
\newcommand{\opt}{\operatorname{OPT}}
\begin{document}

\title{Beating Greedy for Stochastic Bipartite Matching}
 \author{
   Buddhima Gamlath \\
   EPFL\\
  \texttt{buddhima.gamlath@epfl.ch}
     \and
   Sagar Kale \\
  EPFL\\
   \texttt{sagar.kale@epfl.ch}
     \and
   Ola Svensson \\
  EPFL\\
   \texttt{ola.svensson@epfl.ch}
 }
\date{}

\maketitle

\begin{abstract}
  We consider the maximum bipartite matching problem in stochastic settings,
  namely the query-commit and price-of-information models.    In the
  query-commit model, an edge $e$ independently exists with probability $p_e$.
  We can query whether an edge exists or not, but if it does exist, then we
  \emph{have} to take it into our solution.  In the unweighted case, one can
  query edges in the order given by the classical online algorithm of Karp,
  Vazirani, and Vazirani~\cite{KarpVV1990} to get a $(1-1/e)$-approximation.
  In contrast, the previously best known algorithm in the \emph{weighted} case
  is the $(1/2)$-approximation achieved by the greedy algorithm that sorts the
  edges according to their weights and queries in that order.  
  
  Improving upon the basic greedy, we give a $(1-1/e)$-approximation algorithm
  in the weighted query-commit model.  We use a linear program (LP) to upper
  bound the optimum achieved by any strategy. The proposed LP admits several
  structural properties that play a crucial role in the design and analysis of
  our algorithm.  We also extend these techniques to get
  a $(1-1/e)$-approximation algorithm for maximum bipartite matching in the
  price-of-information model introduced by Singla~\cite{Singla2018}, who also
  used the basic greedy algorithm to give a $(1/2)$-approximation. 
\end{abstract}

\section{Introduction}
\label{sec:introduction}
Maximum matching is an important problem in theoretical computer science.  We
consider it in the query-commit and price-of-information models.  These settings
model the situation where the input is random, but we can know a specific part
of the input by incurring some cost, which is explicit in the
price-of-information setting, i.e., pay $\pi_e$ to know the weight of the edge
$e$, and implicit in the query-commit, i.e., the algorithm queries existence of
$e$, and if $e$ exists, the algorithm has to take $e$ into its solution.  We
formalize this now.  In the query-commit model, an edge $e$ independently exists
with probability $p_e$.  We can query whether an edge exists or not, but if it
does exist, then we \emph{have} to take it into our solution (hence the name
\emph{query-commit}).  So, specifically, for the matching problem, if $M$ is our
current matching, then we cannot query an edge $e$ if it intersects with $M$.
In the price-of-information model introduced by Singla~\cite{Singla2018}, edge
weights are random variables and we can query for the weight $W_e$ of an edge
$e$ by paying a (fixed) cost $\pi_e$.  We want to query a set $Q$ of edges and
output a matching $M \subseteq Q$ such that $W(M) - \sum_{e\in Q} \pi_e$ is
maximized. 

In the query-commit model, for bipartite graphs, one can query edges
in the order given by the classical algorithm of Karp, Vazirani, and
Vazirani~\cite{KarpVV1990} to get a $(1-1/e)$ approximation.  However, in the more
general weighted query-commit setting where an edge $e$ exists with weight
$w_e$ with probability $p_e$ and does not exist with probability $1-p_e$, it is
not clear how to use such a strategy. In fact, prior to our work, the best algorithm for the weighted setting was the basic greedy that sorts the edges by
weight $w_e$ and then queries in that order to get a $(1/2)$-approximate matching.  Similarly, in
the price-of-information setting, Singla gave a $(1/2)$-approximation algorithm
based on the greedy approach. 
In this work, we beat the greedy algorithm using
new techniques and give clean algorithms that achieve an approximation ratio of
$1-1/e$ (improving from $1/2$) in the settings of weighted query-commit as well
as price-of-information. A key component of our approach is to upper bound the optimum achieved by any strategy using a linear program (LP), and exploit several structural properties of this LP in the design of our algorithms. We now give a high level description of these techniques.

\subsection*{Techniques}
We first solve the weighted query-commit version, and this part of the paper
contains the essential ideas.  We then expand on these ideas to solve the
price-of-information version\footnote{Actually, we implicitly use
  ``multiple-weight query-commit'' as an intermediate step, where an edge has
  nonnegative random weight, and an algorithm can ask queries of the form ``is
  weight of $e$ greater than $c$,'' and if it is, then the algorithm has to add
  $e$ to its solution.  Then we use a reduction by Singla to reduce from
  price-of-information.}.  Let us first focus on the weighted query-commit
setting.  Here, the input is a bipartite graph $G = (A, B, E)$, and for each
$e \in E$, its existence probability $p_e$ and weight $w_e$.  Our goal is to
design a polynomial-time algorithm that gives a sequence of edges to query such
that after the last query, we end up with a matching of large weight.  First, to
get a handle on the expected value of an optimum strategy ($\opt$), consider the
linear program (LP) below.  For a vertex $u$, let $E_u$ be the set of edges
incident to $u$.
\begin{align*}
  &\text{Maximize } &&\sum_{e \in E} x_e \cdot w_e, \\
  &\text{subject to } &&  \sum_{e \in F} x_e &&\mkern-60mu\leq \Pr[\text{an edge
                                                in $F$ exists}]\,,
                      && \text{ for all } u \in A \cup B \text{ and for all } F \subseteq E_u\,,\\
  &  &   &\mkern35mux_e &&\mkern-60mu\geq 0 \,,&& \text{ for all } e \in E\,.
\end{align*}
Let $x'_e$ be the probability that $\opt$ solution contains $e$.  Since $\opt$
can have at most one edge incident to $u$, for any $F\subseteq E_u$, the events
in $\{\opt \text{contains }e : e\in F \}$ are disjoint.  Hence,
$\sum_{e \in F} x'_e$ is the probability that $\opt$ contains an edge in $F$,
which must be at most the probability that at least one edge in $F$ exists.
Therefore, $(x'_e)_{e\in E}$ has to satisfy the above LP.  We can solve this LP
in polynomial time using a subdmodular-function-minimization algorithm as a
separation oracle for the ellipsoid algorithm.  Let $\mathbf{x}^\ast$ be the
solution.  For a $u \in A$, if we restrict $\mathbf{x}^\ast$ to $E_u$ to get,
say, $\mathbf{x}^\ast_u$, we can write $\mathbf{x}^\ast_u$ as a convex
combination of extreme points of the polytope
\begin{equation}
  \label{eq:1}
  \left\{x \in \R^{E_u}_+ :  \sum_{e\in F} x_e \le \Pr[
  \text{an edge in $F$ exists}] \,\, \forall F\subseteq E_u\right\}\,.
\end{equation}
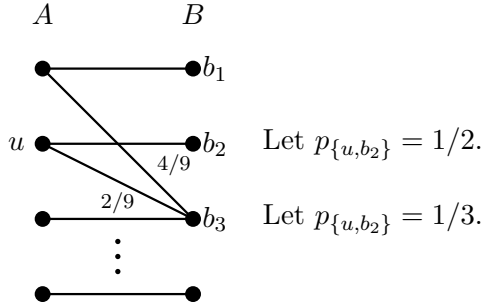
\begin{figure}
\centering
\begin{tikzpicture}[line width=0.3mm]

\draw(0,0) --  (2,0);
\draw(0,1) --  (2,1);
\draw(0,2) -- node[below] {\scriptsize $\mkern90mu4/9$} (2,2);
\draw(0,2) -- node[below] {\scriptsize $2/9$} (2,1);
\draw(0,3) --  (2,3);
\draw(0,3) --  (2,1);

\draw (0,4) node[anchor=north] {$A$};
\draw (2,4) node[anchor=north] {$B$};

\draw (-0.6,2) node[anchor=west] {$u$};
\draw (2.6,3) node[anchor=east] {$b_1$};
\draw (2.6,2) node[anchor=east] {$b_2$};
\draw (2.6,1) node[anchor=east] {$b_3$};

\draw (6,2) node[anchor=east] {Let $p_{\{u,b_2\}} = 1/2$.};
\draw (6,1) node[anchor=east] {Let $p_{\{u,b_2\}} = 1/3$.};

\fill [color=black] (0,0) circle (3pt);
\fill [color=black] (2,0) circle (3pt);
\fill [color=black] (0,1) circle (3pt);
\fill [color=black] (2,1) circle (3pt);
\fill [color=black] (0,2) circle (3pt);
\fill [color=black] (2,2) circle (3pt);
\fill [color=black] (0,3) circle (3pt);
\fill [color=black] (2,3) circle (3pt);

\fill [color=black] (1,0.3) circle (1pt);
\fill [color=black] (1,0.5) circle (1pt);
\fill [color=black] (1,0.7) circle (1pt);

\end{tikzpicture}
\caption{Here, we have $E_u = \{\{u,b_2\}, \{u,b_3\}\}$.  We write
  $\mathbf{x}^\ast_u = (4/9, 2/9)$ as a convex combination of extreme points of
  the polytope described in~\Cref{eq:1}:
  $\mathbf{x}^\ast_u = (2/3)(1/2, 1/6) + (1/3)(1/3, 1/3)$.  For $(1/2, 1/6)$,
  the inequalities corresponding to sets $\{\{u,b_2\}\}$ and
  $\{\{u,b_2\}, \{u,b_3\}\}$ are tight, and for $(1/3, 1/3)$, the inequalities
  corresponding to sets $\{\{u,b_3\}\}$ and $\{\{u,b_2\}, \{u,b_3\}\}$ are
  tight.  Observe that these form a chain.  We note that it is possible that an
  inequality corresponding to a nonnegativity constraint is tight.  Now, say we
  query the edges in the order given by the chain.  For $(1/2, 1/6)$, we first
  query $\{u,b_2\}$ then $\{u,b_3\}$, so we select $\{u,b_2\}$ with probability
  $p_{\{u,b_2\}}=1/2$ and $\{u,b_3\}$ with probability
  $(1-p_{\{u,b_2\}})p_{\{u,b_3\}} = 1/6$, which does indeed correspond to the
  extreme point $(1/2, 1/6)$.}\label{fig:convex}
\end{figure}
 
A key part of our approach is the nice structural properties of extreme points: 
the subsets corresponding to the tight constraints for an
extreme point $\by$ of the above polytope form a chain over a subset of $E_u$,
because the right hand side of the constraints is strictly submodular.  For a
$\by$, if we query the edges in the order given by its chain 
then it can be proved that we commit to an edge $e \in E_u$ with probability
$\by_e$.  Since $\mathbf{x}^\ast_u$ can be written as a convex combination of
such extreme points, if we select an extreme point with probability equal to its
coefficient in the convex combination and query the edges in the order given by
its chain, we commit to an edge $e \in E_u$ with probability
$\mathbf{x}^\ast_e$.  See~\Cref{fig:convex}.  But if we do this independently
for each vertex in $A$, then we end up with collisions on $B$, so we have to do
contention resolution there.  Effectively, for each vertex in $v\in B$, its
neighbor $u$ arrives independently with probability $\mathbf{x}^\ast_{uv}$ and
weight $w_{uv}$, and it has to pick one neighbor so that its expected utility is
close to $\sum_{uv}\mathbf{x}^\ast_{uv}\cdot w_{uv}$.  This setting is similar
to the prophet secretary problem, and we extend the ideas of Ehsani et
al.~\cite{Ehsani2018} to achieve this.

To generalize this from two-point distributions to multi-point distributions, we
think of an edge $e$ to have as many copies as the values its weight can take.
But now instead of being independent, the existence of these copies is correlated.
To handle this, we write a more general LP with constraints corresponding to
sets that are from a lattice family.  We can solve
submodular-function-minimization over a lattice family in polynomial
time~\cite{Grotschel1981}, which gives us a separation oracle for this LP.
Also, the extreme points of the polytope defined by constraints for a
left-hand-side vertex correspond to a chain with properties similar to the
two-point-distribution case (see~\Cref{lem:poi-chain}).

Once we have a query-commit algorithm for multi-point distributions, we can basically get
a price-of-information algorithm by a clean reduction~\cite{Singla2018}: for each edge
$e$ with weight given by the random variable $X_e$ and probing cost $\pi_e$, let
$\tau_e$ be the solution to the equation $\E[\max\{(X_e - \tau_e),0\}] = \pi_e$,
and let $Y_e = \min(X_e, \tau_e)$ be a new random variable.  Now run the
query-commit algorithm with $Y_e$, and whenever the algorithm queries any copy
of the edge $e$, we probe $e$, and  we only pay $\pi_e$ the first time we probe that edge. 

\subsubsection*{Organization of the Paper}
\label{sec:organization-paper}
Next, we review the related work.  In~\Cref{sec:qc}, we see the weighted
query-commit algorithm; we then extend it to the price-of-information setting
in~\Cref{sec:poi}.

\subsection*{Related Work}
\label{sec:related-work}

As mentioned earlier, the algorithm of Karp et al.~gives a $(1 - 1/e) = 0.632$
approximation in the unweighted query-commit model for bipartite graphs.
Costello et al.~\cite{CostelloTT2012} give a $0.573$-approximation for general
graphs and show that no algorithm can give an approximation better than $0.898$ compared to the optimal offline algorithm (that knows all the outcomes before selecting the matching).

Motivated by applications in kidney exchange and online dating, Chen et
al.~\cite{ChenIKMR09} consider the matching problem in the query-commit model
with further constraint that for each vertex $v$, the algorithm can query at
most $t_v$ edges incident to it ($t_v$ is a part of the input) and give a
$(1/4)$-approximation algorithm.  Bansal et al.~\cite{Bansal2012} improve it to
$(1/3)$ for bipartite graphs and to $(1/3.46)$ for general graphs, and also give
a $(1/4)$-approximation in the weighted query-commit (for general graphs); both
of these ratios for the unweighted case are further improved by Adamczyk et
al.~\cite{AdamczykGM15}, who give a $(1/3.709)$-approximation for general graphs.
Baveja et al.~\cite{Baveja2017} improve this to $1/3.224$.  We mention that this
setting is more general than the setting we consider, because $t_v = \deg(v)$
for us, i.e., we do not put any restriction on the number of edges incident to a
vertex that we can query.

Molinaro and Ravi~\cite{MolinaroR11} give an optimal algorithm for a very
special class of sparse graphs in the unweighted query-commit setting.

Blum et al.~\cite{Blum15} consider the maximum matching problem, where, in the
input graph, an edge $e$ exists with probability $p_e$, the algorithm can query
the existence of an edge and does not have to commit, but needs to minimize the
number of queries subject to outputting a good approximation.  This model is considered in several follow-up
works~\cite{Assadi16,Assadi17,Maehara18,Behnezhad18}.

\subsubsection*{Further Related Work in Other Stochastic Models}
Feldman et al.~\cite{Feldman09} consider an online variant of stochastic
matching where the algorithm gets as input a bipartite graph $G=(A,B,E)$, and a
distribution $\mathcal{D}$ over $B$, and $n$ elements are drawn i.i.d from $B$
according to $\mathcal{D}$ (so there may be repetitions) that the algorithm
accesses online.  When a copy $v \sim \mathcal{D}$ arrives online, we have to
match it to an unmatched vertex $u$ in $A$ such that $\{u,v\} \in E$.  Note that
here the existence of an edge is not random in itself but that of a vertex is.
Again, Karp et al.'s algorithm gives a $(1 - 1/e)$-approximation, and Feldman et
al.~give a $0.67$-approximation.  Significant amount of work has been done on this variant as well~\cite{Manshadi11,Haeupler11,Bansal2012,Jaillet13,AdamczykGM15,Brubach16}.

We would like to mention the works of Dean et al.~\cite{Dean04,Dean05}, where they consider
stochastic problems where cost of an input unit is only known as a probability
distribution that is instantiated after the algorithm commits to including the
item in the solution.  Charikar et al.~\cite{Charikar05} and Katriel et
al.~\cite{Katriel07} consider two stage optimization problems, where the first
stage is stochastic with a lower cost for decisions, and in the second stage,
with an increase in the decision cost, the actual input is known.


\section{A Weighted Query-Commit Algorithm}
\label{sec:qc}

In this section, we present a $(1 - 1/e)$-approximation algorithm for the 
maximum weight bipartite matching problem (\mwbm) in the \qc model. 

Let $G=(A, B, E, w)$ be a weighted bipartite graph with $|A| = |B| = n$,
where each edge $e \in E$ has weight $w(e)$ and exists independently with 
probability\footnote{Alternatively, one may think of the weight of an 
edge $e$ as an independent random variable that takes value $w(e)$ with 
probability $p_e$ and value zero with probability $1 - p_e$. 
We consider a more general distribution of edge weights in 
Section~\ref{sec:poi}.} $p_e$.
Let $\bI_e$ be the indicator variable for the event that edge $e$ exists 
(i.e., $\bI_e = 1$ if and only if edge $e$ exists).
The actual value of $\bI_e$ can only be determined by querying edge $e$ to 
check whether it exists or not. 
Given such a graph $G$, a \qc algorithm for \mwbm (adaptively) 
queries a sequence of edges $Q = (e_{q_1}, \dots, e_{q_m})$ and outputs a 
valid matching $M \subseteq Q$.
Since a \qc algorithm is \emph{committed} to add any queried edge 
that exists to its output, $M = \{ e \in Q: \bI_e = 1\}$.  

Let $\mathcal{A}$ be a \qc algorithm for \mwbm and let 
$M(\mathcal{A})$ denote its output on input $G$.
We define the expected utility of $\mathcal{A}$,
$\qcutil(\mathcal{A}) := \E [ \sum_{e \in M(\mathcal{A})} w(e) ]$
to be the expected weight of its output matching, where the expectation is 
over the randomness of the existence of edges and any internal randomness 
of the algorithm.
Let $\opt = \max_{\mathcal{A}} \qcutil(\mathcal{A})$, where the maximum is 
taken over all \qc algorithms\footnote{We remark that our approximation guarantee in the query-commit model also works with respect to the stronger offline adversary that knows all outcomes before selecting the matching. We have presented it in this way as the guarantees in the price-of-information model are of this type.}, be the optimal expected \qc utility.
We give a \qc algorithm \approxqc whose expected \qc utility
$\qcutil(\approxqc)$ is at least $(1 - 1/e) \opt$.

As described in the introduction, our approach consists of two stages. 
First, we solve a linear program to bound the optimal expected \qc 
utility $\opt$.
For each edge $e \in E$ we associate a variable $x_e$, which we think of as 
the probability that an optimal algorithm adds edge $e$ to its output. 
We set our LP constraints on $x_e$ variables to be those that must be 
satisfied by any algorithm that outputs a valid matching formed only 
of edges that exist.
Then the optimal expected utility is upper bounded by the maximum of the 
weighted sum $\sum_{e \in E} x_e \cdot w(e)$ subject to our LP constraints.

Then, we use the structural properties of our LP polytope to define
a distribution over the permutations of edges and use this distribution
to set the query order in our algorithm.
For this, we first define distributions $\qcdist{a}$ over the permutations 
of neighboring edges for each vertex $a \in A$ such that the following holds:
If we draw a random permutation $\sigma_a$ 
from $\qcdist{a}$ for each $a \in A$, query edges in the order of $\sigma_a$, and finally 
add to our output the first edge in $\sigma_a$ that exists, then the expected 
utility is optimal.
However, the output is not guaranteed to be a matching, because it might have
more than one edge incident to a vertex in $B$ (i.e., collisions).

To deal with the issue of collisions in $B$, we view the problem as 
a collection of \emph{prophet secretary} problem instances that has an 
instance for each vertex $b \in B$. 
In the prophet secretary problem, we have one item to sell (position to 
be filled) and a set of buyers (secretaries) arrive in a random order.
Each buyer makes a \emph{take it or leave it} offer to buy the item at some 
random price (or each secretary has some random skill level) whose 
distribution we know beforehand, and we are interested in maximizing the 
profit (or hiring the best secretary).

In their recent work,
Ehsani et al.~\cite{Ehsani2018} gave a $(1 - 1/e)$-competitive algorithm 
for the prophet secretary problem. 
Inspired by this result, we design our algorithm so that for each $b \in B$, 
it recovers at least $(1 - 1/e)$-fraction of the expected weight of the edge
incident to $b$ in an optimal algorithm's output. 
To elaborate, we pick a uniformly random permutation of the vertices in $A$ 
and treat them as buyers arriving at uniformly random times. 
Then, as seen from the perspective of a fixed vertex (i.e. an item) $b \in B$, 
buyers make a one-time offer with random values (corresponding to weights of edges incident to $b$), and
our algorithm uses dynamic thresholds that depend on arrival times to
make the decision of whether to sell the item at the offered price or not. 
One may view our algorithm as running the $(1 - 1/e)$-competitive 
prophet secretary algorithm in parallel for each of the vertices in $B$, but the
way we set the dynamic thresholds is different.

The rest of this section is organized as follows:
In Section~\ref{sec:qc-ub}, we describe how to upper bound $\opt$ using an LP.
In Section~\ref{sec:qc-lp-structure}, we analyze the structure of 
the LP polytope  and show how to construct the distributions $\qcdist{a}$ 
for each $a \in A$.
Finally in Section~\ref{sec:qc-algo}, we present our proposed 
algorithm and adapt the ideas of 
Ehsani et al.~\cite{Ehsani2018} to analyze its approximation guarantee.

\subsection{Upper-bounding the optimal expected utility}
\label{sec:qc-ub}

For each vertex $u \in A \cup B$, let $\adj{u}$ denote the set of edges 
incident to $u$.
For a subset of edges $F\subseteq E$, let $f(F)$ be the probability that at 
least one edge in $F$ exists, 
then $f(F) = 1 - \prod_{e \in F}(1 - p_e)$, because each edge $e$ exists
independently with probability $p_e$.

Fix any \qc algorithm $\mathcal{A}$.
For each edge $e \in E$, let $x_e$ be the probability that $\mathcal{A}$ 
outputs $e$. 
Consider a vertex $u \in A \cup B$ and a subset $F \subseteq \adj{u}$.
Since $\mathcal{A}$ outputs a valid matching, the events that edge $e$ being 
added to the output of $\mathcal{A}$ for each $e \in F$ are disjoint, and 
hence the probability that $\mathcal{A}$ adds one of the edges in $F$ to its 
output is $\sum_{e \in F} x_e$. 
But, for an edge to be added to the output, it must exist in the first place,
and thus it must be the case that $\sum_{e \in F} x_e \leq f(F)$ (because 
$f(F)$ is the probability that at least one edge in $F$ exists). 
Therefore, $\mathbf{x} = (x_e)_{e \in E}$ is a feasible solution to 
the following linear program, which we call $\qclp$.  So, the expected utility
$\qcutil(A)$ of $\mathcal{A}$ is at most the value of $\qclp$, which implies
Lemma~\ref{lem:qc-ub} below.
\begin{align*}
&\text{Maximize }   	& & \sum_{e \in E} x_e \cdot w(e), \\
&\text{subject to } 	& & \sum_{e \in F} x_e & & \mkern-100mu \leq f(F) \,,
                    	& & \text{ for all } u \in A \cup B \text{ and for all } 
                    			F \subseteq \adj{u} \,, \\
& & &\mkern35mux_e  	& & \mkern-100mu \geq 0 \,, 
	 									& & \text{ for all } e \in E\,.
\end{align*}

\begin{lemma} \label{lem:qc-ub}
The optimal expected \qc utility, $\opt$, is upper bounded by the 
value of $\qclp$.
\end{lemma}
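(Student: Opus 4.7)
The plan is to formalize the informal argument given in the paragraph just above the lemma statement. I fix an arbitrary \qc algorithm $\mathcal{A}$ (possibly adaptive and randomized) and, for each edge $e \in E$, define
\[
x_e := \Pr[e \in M(\mathcal{A})],
\]
where the probability is over the randomness of edge existence and the algorithm's internal randomness. By linearity of expectation,
\[
\qcutil(\mathcal{A}) \;=\; \E\!\left[\sum_{e \in M(\mathcal{A})} w(e)\right] \;=\; \sum_{e \in E} x_e \cdot w(e),
\]
so it suffices to show that the vector $\mathbf{x} = (x_e)_{e \in E}$ is feasible for $\qclp$; this would immediately yield $\qcutil(\mathcal{A}) \leq \text{val}(\qclp)$, and taking supremum over $\mathcal{A}$ gives the lemma.

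Feasibility requires $x_e \geq 0$ (trivial, since $x_e$ is a probability) and, for every $u \in A \cup B$ and every $F \subseteq \adj{u}$, the inequality $\sum_{e \in F} x_e \leq f(F)$. For this I would argue in two steps. First, since $M(\mathcal{A})$ is always a matching, at most one edge of $\adj{u}$ can belong to $M(\mathcal{A})$; in particular, for distinct $e, e' \in F$ the events $\{e \in M(\mathcal{A})\}$ and $\{e' \in M(\mathcal{A})\}$ are disjoint. Hence
\[
\sum_{e \in F} x_e \;=\; \sum_{e \in F}\Pr[e \in M(\mathcal{A})] \;=\; \Pr\!\Big[\,\exists\, e \in F:\; e \in M(\mathcal{A})\Big].
\]
Second, because $\mathcal{A}$ is a \qc algorithm, $M(\mathcal{A}) \subseteq \{e : \bI_e = 1\}$, so the event on the right hand side is contained in the event that at least one edge of $F$ exists, whose probability is exactly $f(F) = 1 - \prod_{e \in F}(1 - p_e)$. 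Combining these two observations gives $\sum_{e \in F} x_e \leq f(F)$, as required.

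I do not anticipate a real obstacle here; the proof is essentially bookkeeping. The one point worth being careful about is that $\mathcal{A}$ may be adaptive, so the events $\{e \in M(\mathcal{A})\}$ are not independent of one another nor of the existence indicators $\bI_e$. The argument above sidesteps this cleanly: disjointness follows from the matching constraint on every realization, and the containment $M(\mathcal{A}) \subseteq \{e : \bI_e = 1\}$ also holds on every realization, so no independence is ever invoked. With these observations the lemma follows immediately.
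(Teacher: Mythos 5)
Your proof is correct and follows essentially the same argument the paper gives in the paragraph preceding the lemma: define $x_e = \Pr[e \in M(\mathcal{A})]$, use disjointness of the events $\{e \in M(\mathcal{A})\}$ over $e \in F \subseteq \adj{u}$ (from the matching constraint) plus $M(\mathcal{A}) \subseteq \{e : \bI_e = 1\}$ to establish feasibility in $\qclp$, then conclude by linearity of expectation. Your explicit remark that no independence is ever invoked is a nice clarification of why the argument is robust to adaptivity, but it is the same proof.
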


\subsection{Structure of the LP and its implications}
\label{sec:qc-lp-structure}

Although $\qclp$ has exponentially many constraints, we can solve it in
polynomial time.

\begin{lemma} \label{lem:qc-solve-lp}
The linear program $\qclp$ is polynomial-time solvable.
\end{lemma}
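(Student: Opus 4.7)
The plan is to solve $\qclp$ via the ellipsoid method, and for this I need a polynomial-time separation oracle. Since for every vertex $u$ and every $F \subseteq \adj{u}$ the feasibility of a candidate $\mathbf{x} \geq 0$ requires $\sum_{e \in F} x_e \leq f(F)$, separation reduces to checking, for each $u \in A \cup B$, whether
\[
\min_{F \subseteq \adj{u}} \bigl(f(F) - \textstyle\sum_{e \in F} x_e\bigr) \geq 0,
\]
and, if not, returning a minimizer $F^\star$ as a violated constraint.

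The main step is therefore to show that the function $g_u(F) := f(F) - \sum_{e \in F} x_e$ is submodular on $2^{\adj{u}}$, so that its minimum can be computed in polynomial time by any standard submodular-function-minimization algorithm. The linear term $-\sum_{e\in F} x_e$ is modular, so it suffices to verify that $f(F) = 1 - \prod_{e\in F}(1-p_e)$ is submodular. This is a short marginal-gain calculation: for $S \subseteq T$ and $e \notin T$,
\[
f(S \cup \{e\}) - f(S) = p_e \prod_{e' \in S}(1 - p_{e'}) \;\geq\; p_e \prod_{e' \in T}(1 - p_{e'}) = f(T \cup \{e\}) - f(T),
\]
since $\prod_{e' \in S}(1-p_{e'}) \geq \prod_{e' \in T}(1-p_{e'})$ when $p_{e'} \in [0,1]$. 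Hence $f$ (and thus each $g_u$) is submodular.

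Putting the pieces together: for any proposed point $\mathbf{x}$, I run submodular-function minimization on each of the $|A|+|B|$ functions $g_u$; this takes polynomial time in total. If every minimum is nonnegative and the nonnegativity constraints $x_e \geq 0$ also hold, the point is feasible; otherwise I return a violated constraint corresponding to a minimizing $F^\star$ (or to a negative coordinate). Plugging this separation oracle into the ellipsoid method yields a polynomial-time algorithm for $\qclp$.

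I do not expect a real obstacle here: the only nontrivial point is submodularity of $f$, which follows from the one-line marginal-gain inequality above. Everything else is the standard ellipsoid-plus-separation template together with an off-the-shelf submodular minimization subroutine.
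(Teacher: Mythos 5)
Your proof is correct and follows the same approach as the paper: reduce separation for $\qclp$ to minimizing $g_u(F) = f(F) - \sum_{e\in F}x_e$ over subsets of $\adj{u}$ via submodular-function minimization, then run the ellipsoid method. The only cosmetic difference is that the paper simply notes $f$ is submodular because it is a coverage function, while you verify this directly via the marginal-gain inequality.
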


\begin{proof}
Observe that for a fixed vertex $u \in A \cup B$, the constraints 
$\sum_{e \in F} x_e \leq f(F)$ for all $F \subseteq \adj{u}$, can be 
re-written as $0 \leq g_u(F)$ for all $F \subseteq \adj{u}$, where 
$g_u(F) = f(F) - \sum_{e \in F}x_e$ is a submodular function (notice that 
$f$ is submodular because it is a coverage function while $\sum_{e \in F}x_e$ 
is clearly modular).
Thus we can minimize $g_u$ over all subsets of $\adj{u}$ for all 
$u \in A \cup B$ in polynomial time using $O(n)$ submodular minimizations 
to find a violating constraint.
If none of the minimizations gives a negative value and if $x_e \geq 0$ 
for all $e \in E$, then the solution is feasible.
Thus we can solve $\qclp$ in polynomial-time using the ellipsoid method.
\end{proof}

For the rest of this section, we assume that $0 < p_e < 1$ for all $e \in E$.
We can safely ignore those edges $e \in E$ for which $p_e = 0$, 
and for those with $p_e = 1$, we can scale down the probabilities (at a small
loss in the objective value) due to the following lemma.

\begin{lemma} \label{lem:qc-scale-lp}
Let $\tilde{p}_e = (1 - \gamma) p_e$ for all $e \in E$, and for 
a subset $F \subseteq E$, let $\tilde{f}(F) = 1 - \prod_{e \in F}(1 - 
\tilde{p}_e)$  be the probability that at least one edge in $F$ exists 
under the scaled down probabilities $\tilde{p}_e$.
If we replace $f(F)$ in $\qclp$ by $\tilde{f}(F)$, the value of the resulting 
LP is at least $(1 - \gamma)$ times the value of $\qclp$.
\end{lemma}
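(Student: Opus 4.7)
The plan is to exhibit a feasible solution to the scaled LP whose objective value is $(1-\gamma)$ times the value of $\qclp$. Specifically, let $\mathbf{x}^{\ast}$ be an optimal solution to $\qclp$ and set $\tilde{\mathbf{x}} := (1-\gamma)\,\mathbf{x}^{\ast}$. Since the objective is linear, $\tilde{\mathbf{x}}$ achieves value exactly $(1-\gamma)\sum_{e\in E} x^{\ast}_e \cdot w(e)$, so the entire statement reduces to verifying that $\tilde{\mathbf{x}}$ is feasible for the LP with $f$ replaced by $\tilde f$.

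Nonnegativity is immediate, so fix $u \in A \cup B$ and $F \subseteq \adj{u}$. Since $\sum_{e\in F} \tilde x_e = (1-\gamma)\sum_{e\in F} x^{\ast}_e \leq (1-\gamma) f(F)$ by feasibility of $\mathbf{x}^{\ast}$, it is enough to show that $(1-\gamma) f(F) \leq \tilde f(F)$. Writing $q_e := 1-p_e$, so that $\tilde q_e := 1 - \tilde p_e = \gamma + (1-\gamma) q_e$, a direct rearrangement shows that this is equivalent to the elementary claim
\[
\prod_{e\in F}\bigl(\gamma + (1-\gamma)\,q_e\bigr) \;\leq\; \gamma \;+\; (1-\gamma)\prod_{e\in F} q_e.
\]

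This inequality is the main (if modest) obstacle of the proof. My plan is to prove it by induction on $|F|$, with $|F| \leq 1$ giving equality. For the inductive step, pick any $e_0 \in F$, set $P := \prod_{e\in F\setminus\{e_0\}} q_e$, apply the inductive hypothesis to $F\setminus\{e_0\}$ to bound $\prod_{e\in F\setminus\{e_0\}} \tilde q_e \leq \gamma + (1-\gamma) P$, and multiply by $\tilde q_{e_0} = \gamma + (1-\gamma) q_{e_0}$. Expanding, the gap to the desired right-hand side $\gamma + (1-\gamma) P q_{e_0}$ simplifies to $\gamma(1-\gamma)\,(1-q_{e_0})\,(1-P) \geq 0$, closing the induction.

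As an alternative route, one can observe that $h(t) := \prod_{e\in F}(t + (1-t) q_e)$ is convex on $[0,1]$: indeed $h''(t)$ expands as a sum of nonnegative terms $2\, p_e p_{e'}\prod_{e''\notin\{e,e'\}}(t+(1-t)q_{e''})$, so $h$ lies below the chord joining $(0,\prod_e q_e)$ and $(1,1)$, which at $t = \gamma$ is exactly the displayed inequality. Either way, combining this inequality with the scaling argument completes the proof.
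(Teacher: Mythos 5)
Your proof is correct and follows essentially the same route as the paper's: both reduce the lemma to the pointwise inequality $(1-\gamma)\,f(F) \leq \tilde f(F)$, prove it by induction on $|F|$ (the paper phrases it via the recurrence $Q_i = Q_{i-1} + p_i(1-Q_{i-1})$ on the union probability, you phrase it via the equivalent product inequality $\prod_e \tilde q_e \leq \gamma + (1-\gamma)\prod_e q_e$), and then conclude by observing that scaling an optimal solution $\mathbf{x}^\ast$ by $(1-\gamma)$ gives a feasible point for the modified LP. Your convexity argument for the key inequality is a pleasant alternative the paper does not mention, but the overall structure is the same.
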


\begin{proof}
Fix a set $F \subseteq E$ and label the edges in $F$ from $1$ through $|F|$. For $i = 1, \dots, |F|$, let $Q_i = 1 - \prod_{e = 1}^i(1 - p_e)$ and let $\tilde{Q}_i = 1 - \prod_{e = 1}^i(1 - \tilde{p}_e)$. Then, for $i > 1$ we have that
$Q_i = Q_{i-1} + p_i(1 - Q_{i-1})$, and similarly, $\tilde{Q}_i = \tilde{Q}_{i-1} + \tilde{p}_i(1 - \tilde{Q}_{i-1})$. By definition, we have $f(F) = Q_{|F|}$ and $\tilde{f}(F) = \tilde{Q}_{|F|}$.
We now prove that $\tilde{Q}_{i} \geq (1 - \gamma) Q_{i}$ for $i = 1, \dots, |F|$ by induction. 

For the base case, we have $\tilde{Q}_{1} = (1 - \gamma)Q_{1}$. Notice that, by the definition of $\tilde{p}_i$'s, we have $Q_i \geq \tilde{Q}_i$. Thus, for $i > 1$ we have that
\begin{align*}
    \tilde{Q}_i &= \tilde{Q}_{i-1} + \tilde{p}_i(1 - \tilde{Q}_{i-1})  \\
    &\geq (1 - \gamma) Q_{i-1} + (1 - \gamma)p_i(1 - \tilde{Q}_{i-1}) & & & \text{(by inductive hypothesis)}\\
    &\geq(1 - \gamma) Q_{i-1} + (1 - \gamma)p_i(1 - Q_{i-1}) & & & \text{(because $Q_{i-1} \geq \tilde{Q}_{i-1}$)} \\
    &= (1 - \gamma)(Q_{i-1} + p_i(1 - Q_{i-1})) = (1 - \gamma) Q_i.
\end{align*}
	
Thus if we scale down the polytope defined by the constraints of 
$\qclp$ by a factor of $(1 - \gamma)$, the resulting polytope is contained 
inside the polytope defined by $\tilde{f}(F)$ constraints. 
Moreover, all extreme points of both the polytopes have non-negative 
coordinates and the objective function has non-negative coefficients.
Hence the claim of Lemma~\ref{lem:qc-scale-lp} follows.
\end{proof}

\begin{remark}
The expected utility of our proposed algorithm is 
$(1 - 1/e) \cdot \opt^\ast \geq (1 - 1/e) \cdot \opt$, where $\opt^\ast$ 
is the optimal LP value of $\qclp$.
Thus, in the cases where the assumption $p_e < 1$ for all $e \in E$ does 
not hold, we can scale the probabilities down by $(1 - \gamma)$, and 
consequently the guarantee on expected utility will at least be 
$(1 - \gamma)(1 - 1/e) \cdot \opt$ due to Lemma~\ref{lem:qc-scale-lp}. 
We can choose $\gamma$ to be arbitrarily small.
To implement the scaling down operation, we can simply replace
each query made by an algorithm with a function that only queries with 
probability $(1 - \gamma)$.
\end{remark}

The assumption that $0< p_e < 1$ for all $e \in E$ yields the following lemma 
on the function $f$.

\begin{lemma} \label{lem:qc-strict-submod}
Fix a vertex $u \in A \cup B$, and suppose that $0< p_e < 1$ for all 
$e \in \adj{u}$. 
Then the function $f$ is strictly submodular 
and strictly increasing on subsets of $\adj{u}$.
That is:
\begin{enumerate}
\item \label{prop:qc-strict-submod} For all subsets $A, B \subseteq \adj{u}$ 
such that $A \setminus B \neq \emptyset$ and $B \setminus A \neq \emptyset$, 
$f(A) + f(B) > f(A \cup B) + f(A \cap B)$.
\item \label{prop:qc-strict-inc} For all $A \subsetneq B \subseteq 
\adj{u}$, $f(A) < f(B)$.
\end{enumerate}
\end{lemma}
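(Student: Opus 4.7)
The plan is to work with the complementary quantity $\bar f(F) := 1 - f(F) = \prod_{e \in F}(1-p_e)$, which turns the additive/submodular structure into a purely multiplicative one. The key algebraic identity I would exploit is
\[
  \bar f(A) \cdot \bar f(B) \;=\; \prod_{e \in A}(1-p_e) \prod_{e \in B}(1-p_e) \;=\; \prod_{e \in A\cup B}(1-p_e) \cdot \prod_{e \in A\cap B}(1-p_e) \;=\; \bar f(A\cup B)\cdot \bar f(A\cap B),
\]
so $\bar f$ is multiplicative across the union/intersection decomposition. This is really what drives the strictness.

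For strict submodularity (Property 1), I would rewrite the desired inequality $f(A)+f(B) > f(A\cup B) + f(A\cap B)$ as $\bar f(A\cup B) + \bar f(A\cap B) - \bar f(A) - \bar f(B) > 0$. Introducing the abbreviations $a := \bar f(A\setminus B)$, $b := \bar f(B\setminus A)$, $c := \bar f(A\cap B)$, the multiplicativity gives $\bar f(A) = ac$, $\bar f(B) = bc$, $\bar f(A\cup B) = abc$, $\bar f(A\cap B) = c$, and the expression reduces to $c(1-a)(1-b)$. The assumption $0 < p_e < 1$ forces every factor $(1-p_e)$ to lie strictly in $(0,1)$, so $c > 0$ (possibly empty product equals $1$), while the nonemptiness of $A\setminus B$ and $B\setminus A$ together with $p_e > 0$ forces $a < 1$ and $b < 1$. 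Hence the whole expression is strictly positive.

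For strict monotonicity (Property 2), I would similarly write $f(B) - f(A) = \bar f(A) - \bar f(B) = \bar f(A)\bigl(1 - \bar f(B\setminus A)\bigr)$ using the multiplicativity $\bar f(B) = \bar f(A)\bar f(B\setminus A)$. Since $\bar f(A) > 0$ (all factors are positive) and $\bar f(B\setminus A) < 1$ because $B\setminus A$ is nonempty and contains some edge with $p_e > 0$, the difference is strictly positive.

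There is no real obstacle here; the only subtlety is making sure the $0 < p_e < 1$ assumption is used in the right places — specifically, $p_e < 1$ to guarantee $c > 0$ (no factor is zero, so the product over $A\cap B$ is strictly positive even if $A\cap B$ is large), and $p_e > 0$ on $A\setminus B$ and $B\setminus A$ to guarantee the strict inequalities $a < 1$ and $b < 1$. Both facts follow immediately from the hypothesis on edges incident to $u$, so the proof is a short direct calculation once one passes to the multiplicative variable $\bar f$.
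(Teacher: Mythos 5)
Your proof is correct and follows essentially the same route as the paper: both pass to the complement $g(F) = 1 - f(F) = \prod_{e\in F}(1-p_e)$, use its multiplicativity over the decomposition into $A\cap B$, $A\setminus B$, $B\setminus A$, and reduce strict submodularity to $(1-a)(1-b) > 0$ with $a,b \in (0,1)$ and the $A\cap B$ factor strictly positive. Your factorization $c(1-a)(1-b)$ is just a slightly more compact packaging of the paper's $a + b < 1 + ab$ step.
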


\begin{proof}
Let $g(F) = 1 - f(F) = \prod_{e \in F}(1 - p_e)$ (note that 
$g(\emptyset) = 1$).
Notice that for $F_1, F_2 \subseteq F$ such that $F_1 \cap F_2 = \emptyset$ 
and $F_1 \cup F_2 = F$, it holds that $g(F) = g(F_1) \cdot g(F_2)$.
Let $A, B \subseteq \adj{u}$ be two sets such that $A \setminus B \neq 
\emptyset$ and $B \setminus A \neq \emptyset$.
It is sufficient to show that $g(A) + g(B) < g(A \cup B) + g(A \cap B)$.
We have 
\begin{align}
g(A) + g(B) &= g(A \cap B)\left( \underbrace{g(A \setminus B)}_{a} + 
\underbrace{g(B \setminus A)}_{b} \right), \label{eq:a1}
\intertext{and }
g(A \cup B) + g(A \cap B) &= g(A \cap B) \left( \underbrace{g(A \setminus B) 
\cdot g(B \setminus A)}_{a \cdot b} + 1 \right). \label{eq:a2}
\end{align}
Since $A \setminus B \neq \emptyset$ and $B \setminus A \neq \emptyset$, 
and $0 < p_e < 1$ for all $e \in E$, we have $g(A \cap B) > 0$ and both 
$a, b < 1$. 
Thus $a + b < 1 + a \cdot b$, because $(1 - a)(1 - b) > 0$. 
This combined with Equations \eqref{eq:a1} and \eqref{eq:a2} yields 
Property~\ref{prop:qc-strict-submod}.

Now consider $A \subsetneq \adj{u}$ and any edge $e \in \adj{u} \setminus A$. 
To prove Property~\ref{prop:qc-strict-inc}, it is sufficient to show that
$f(A \cup \{ e \}) > f(A)$, or equivalently, $g(A \cup \{ e \}) < g(A)$.
This is straightforward since $g(A \cup \{e\})/{g(A)} = 1 - p_e < 1$,
because $p_e > 0$.
\end{proof}

Let $\mathbf{x}^\ast = (x^\ast_e)_{e \in E}$ be an optimal solution to 
$\qclp$.  
Fix a vertex $a \in A$.  
Then, $\mathbf{x}_{a} = (x^\ast_e)_{e \in \adj{a}}$, which is 
$\mathbf{x}^\ast$ restricted only to those coordinates that correspond to 
edges in $\adj{a}$, satisfy the following constraints:
\begin{align}
\sum_{e \in F}x_e &\leq f(F)\,, && \text{for all } F \subseteq \adj{a}\,, 
\label{cons:qc-f} \\
x_e & \geq 0\,, && \text{for all } e \in \adj{a}\,. \nonumber
\end{align}
Notice that these constraints are only a subset of the constraints of 
$\qclp$.

Let $\qcpt{a}$ denote the polytope defined by the above constrains.  The extreme
points of $\qcpt{a}$ have a nice structure that becomes crucial when designing
the probability distribution $\qcdist{a}$ over permutations of edges.  Namely,
for any extreme point, the sets for which Constraint~\eqref{cons:qc-f} is tight
form a chain. Moreover, each set in the chain has exactly one more element than
its predecessor and this element is non-zero coordinate of the extreme point.
Formally, we have Lemma~\ref{lem:qc-chain} below.

\begin{lemma} \label{lem:qc-chain}
Let $\mathbf{y} = (y_e)_{e \in \adj{a}}$ be an extreme point of $\qcpt{a}$ 
and let $Y = \{e \in \adj{a} : y_e > 0\}$ be the set of edges that correspond 
to the non-zero coordinates of $\mathbf{y}$.  
Then there exist $|Y|$ subsets $S_1, \dots, S_{|Y|}$ of $\adj{a}$ such that
$S_1 \subsetneq S_2 \subsetneq \dots \subsetneq S_{|Y|}$ with the following 
properties:
\begin{enumerate}
\item Constraint~\eqref{cons:qc-f} is tight for all $S_1, S_2, 
\dots S_{|Y|}$. 
That is $\sum_{e\in S_i} y_e = f(S_i)$ for all $i = 1, \dots, |Y|$.
\item For each $i = 1, \dots, |Y|$, the set $S_i \setminus 
S_{i-1}$ contains exactly one element $e_i$, and $y_{e_i}$ is non-zero
(i.e., $e_i \in Y$).
\end{enumerate}
\end{lemma}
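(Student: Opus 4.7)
The plan is to combine the extreme-point characterization of $\mathbf{y}$ with the two strict properties of $f$ established in \Cref{lem:qc-strict-submod}. Since $\mathbf{y}$ is an extreme point of the polytope $\qcpt{a} \subseteq \R^{\adj{a}}$, it will be determined uniquely by a system of $|\adj{a}|$ linearly independent tight constraints. The tight nonnegativity constraints are exactly $y_e = 0$ for $e \in \adj{a} \setminus Y$, contributing $|\adj{a}| - |Y|$ linearly independent equations that span the coordinate directions outside $Y$. Hence I will need $|Y|$ further tight constraints of the form $\sum_{e \in F} y_e = f(F)$ whose indicator vectors $\mathbf{1}_F$ remain linearly independent after the coordinates in $\adj{a} \setminus Y$ are dropped; equivalently, the restrictions $\mathbf{1}_{F \cap Y}$ of the chosen tight sets must be linearly independent in $\R^Y$. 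Let $\mathcal{T} = \{F \subseteq \adj{a} : \sum_{e \in F} y_e = f(F)\}$ denote the family of all tight sets.

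The crux will be to show that $\mathcal{T}$ is actually a chain of subsets of $Y$. For the chain property, I would assume for contradiction that $A, B \in \mathcal{T}$ are incomparable, so both $A \setminus B$ and $B \setminus A$ are nonempty. Modularity of the sum yields
\[
f(A) + f(B) = \sum_{e \in A} y_e + \sum_{e \in B} y_e = \sum_{e \in A \cup B} y_e + \sum_{e \in A \cap B} y_e \leq f(A \cup B) + f(A \cap B),
\]
contradicting the strict submodularity stated in Property~\ref{prop:qc-strict-submod}. For the ``inside $Y$'' part, suppose some $F \in \mathcal{T}$ contained an element $e^\ast \notin Y$; then $y_{e^\ast} = 0$ would give $\sum_{e \in F \cap Y} y_e = \sum_{e \in F} y_e = f(F) > f(F \cap Y)$ by the strict monotonicity of Property~\ref{prop:qc-strict-inc}, violating the LP constraint for $F \cap Y$.

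With $\mathcal{T}$ a chain of subsets of $Y$, extracting the $S_i$'s is a short counting step. Any collection of distinct nonempty nested subsets of $Y$ has linearly independent indicator vectors (walk down the chain and extract one new coordinate at each step), so to supply the needed $|Y|$ independent constraints, $\mathcal{T}$ must contain at least $|Y|$ distinct nonempty sets. On the other hand, a strict chain of nonempty subsets of a $|Y|$-element set has at most $|Y|$ members, so equality must hold and the sizes are forced to be $1, 2, \ldots, |Y|$. Labelling these tight sets by inclusion as $S_1 \subsetneq S_2 \subsetneq \cdots \subsetneq S_{|Y|}$ gives Property~1 directly, and the unit-size increments yield Property~2 upon setting $S_0 := \emptyset$.

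I expect the main obstacle to be the step that forces every tight set to lie inside $Y$. Without the strict monotonicity of $f$, a tight set $F$ containing some $e^\ast \notin Y$ could coexist with a strictly slack constraint at $F \cap Y$; this would permit elements of $\adj{a} \setminus Y$ to appear in the chain of tight sets and could destroy the clean singleton-increment structure that Property~2 demands. Property~\ref{prop:qc-strict-inc} of \Cref{lem:qc-strict-submod}, which in turn depends on the reduction to $p_e < 1$, is precisely what rules this out.
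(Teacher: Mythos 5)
Your proof is correct and follows essentially the same strategy as the paper: use strict submodularity of $f$ to force the tight sets to form a chain, use strict monotonicity to confine them to $Y$ (equivalently, to force singleton increments), and count. The only cosmetic differences are that you make explicit the linear-independence bookkeeping behind the paper's assertion that ``at least $|Y|$ constraints are tight,'' and you phrase the monotonicity step as ``every tight set lies inside $Y$'' where the paper phrases it as ``no increment $S_i\setminus S_{i-1}$ contains a zero coordinate''---the same contradiction either way.
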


\begin{proof}
It is clear that at least $|Y|$ constraints in~\eqref{cons:qc-f} are tight.
If $|Y| = 1$, the claim of the lemma is obviously true. 
Suppose that $|Y| > 1$.
Now let $A, B \subseteq \adj{a}$ be two different sets for which 
Constraint~\eqref{cons:qc-f} is tight. 
Then we have 
$$f(A) + f(B) = \sum_{e \in A} y_e + \sum_{e \in B} y_e 
	= \sum_{e \in A \cup B} y_e+ \sum_{e \in A \cap B} y_e 
	\leq f(A \cup B) + f (A \cap B),$$
where the last inequality follows because $\mathbf{y}$ satisfies 
Constraint~\eqref{cons:qc-f}. 

Observe that, if $A \nsubseteq B$ and $B \nsubseteq A$, then by 
Lemma~\ref{lem:qc-strict-submod}, $f(A) + f(B) > f(A \cup B) + f(A \cap B)$. 
Thus, it must be the case that either $A \subsetneq B$ or $B \subsetneq A$, 
and consequently there exist $|Y|$ sets $S_1, S_2, \dots, S_{|Y|}$ such that 
$S_1 \subsetneq S_2 \subsetneq \dots \subsetneq S_{|Y|}$, for which 
Constraint~\eqref{cons:qc-f} is tight. 

For each $i = 1, 2, \dots, |Y|$, we thus have that $\sum_{e \in S_i 
\setminus S_{i-1}}y_e = f(S_i) - f(S_{i-1}) > 0$, where the 
inequality is due to the strictly increasing property of $f$. 
This implies that each $S_i \setminus S_{i-1}$ must contain at least one 
edge $e_i$ such that $y_{e_i} >  0$, and since there are only $|Y|$ 
non-zero coordinates in $\mathbf{y}$, each $S_i \setminus S_{i-1}$ must
contain exactly one such $e_i$.  Now, suppose that some $S_i \setminus S_{i-1}$
contains some $e'_i$ such that $y_{e'_i} = 0$. 
Then $f(S_i) = \sum_{e \in S_i} y_e = \sum_{e \in S_i \setminus \{e'_i\}} 
y_e \leq f(S_i \setminus \{e'_i\}) < f(S_i)$ yields a contradiction. 
Here, the first inequality is due to Constraint~\eqref{cons:qc-f} whereas
the last inequality is due the strictly increasing property of $f$.
\end{proof}

Now fix a vertex $a \in A$ and consider the simple query algorithm given in 
Algorithm~\ref{alg:qc-query}, which outputs at most one edge adjacent to 
vertex $a$.
In Algorithm~\ref{alg:qc-query}, $\qcdist{a}$ is a distribution over
the permutations of edges in some subsets of $\adj{a}$ that,
by~\Cref{lem:qc-distrbutions}, can be found in polynomial time.
We have the following lemma considering Algorithm~\ref{alg:qc-query}.

\begin{algorithm}[H]
	\DontPrintSemicolon
	Draw a permutation $\sigma$ from $\qcdist{a}$ for some fixed vertex 
	$a \in A$. 
	\label{algstep:select-permutation-1} \; 
	\ForEach{edge $e$ in the order of $\sigma$}{ 
		Query edge $e$ to check whether it exists. \;
		If edge $e$ exists, output $e$ and terminate. \;
		}
	\caption{Query algorithm for selecting an edge adjacent to a fixed vertex 
		$a \in A$.}
  \label{alg:qc-query}
\end{algorithm}

\begin{lemma} \label{lem:qc-distrbutions}
	Let $\mathbf{x}^\ast$ be an optimal solution to $\qclp$ and let 
	$\mathbf{x}^\ast_a = (x^\ast_e)_{e \in \adj{a}}$ be its restriction to 
	the coordinates that corresponds to edges in $\adj{a}$.
	Then there exists a distribution $\qcdist{a}$ over the permutations of 	
	subsets of $\adj{a}$ with the following property:
	When the permutation $\sigma$ is drawn from $\qcdist{a}$ in 
	Algorithm~\ref{alg:qc-query}, the probability 
that the algorithm outputs the edge $e$ is $x^\ast_e$, hence the expected	weight of the edge output by 
	Algorithm~\ref{alg:qc-query} is $\sum_{e \in E_a}x^\ast_e w(e)$.
	Moreover, a permutation of edges from $\qcdist{a}$ can be sampled in 
	polynomial time.
\end{lemma}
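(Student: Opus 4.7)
The plan is to build $\qcdist{a}$ by decomposing $\mathbf{x}^\ast_a$ into a convex combination of extreme points of $\qcpt{a}$ and associating a permutation with each extreme point using its chain structure from \Cref{lem:qc-chain}.

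First, I would argue that $\mathbf{x}^\ast_a \in \qcpt{a}$ (which holds since $\mathbf{x}^\ast$ satisfies the full LP and $\qcpt{a}$ is defined by a subset of those constraints). By Carath\'{e}odory's theorem we can write $\mathbf{x}^\ast_a = \sum_{j} \lambda_j \mathbf{y}^{(j)}$ as a convex combination of at most $|\adj{a}| + 1$ extreme points $\mathbf{y}^{(j)}$ of $\qcpt{a}$, with $\lambda_j \ge 0$ and $\sum_j \lambda_j = 1$. Such a decomposition can be computed in polynomial time by a standard iterative procedure: starting from $\mathbf{x}^\ast_a$, move along a direction in the kernel of the currently tight constraints until a new coordinate hits zero or a new constraint becomes tight, producing one extreme point per iteration, with at most a polynomial number of iterations. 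The required linear-algebraic operations use only the edges in $\adj{a}$, so they are polynomial in the input size; the only oracle needed is submodular minimization to detect new tight constraints, and \Cref{lem:qc-solve-lp} already established that this is polynomial-time.

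Next, for each extreme point $\mathbf{y} = \mathbf{y}^{(j)}$ with support $Y = \{e : y_e > 0\}$, \Cref{lem:qc-chain} gives a chain $S_1 \subsetneq \cdots \subsetneq S_{|Y|}$ of tight sets with $S_i \setminus S_{i-1} = \{e_i\}$ and $e_i \in Y$. This naturally induces the permutation $\sigma^{(j)} = (e_1, e_2, \dots, e_{|Y|})$ on $Y \subseteq \adj{a}$. I would then verify that running Algorithm~\ref{alg:qc-query} with $\sigma^{(j)}$ outputs edge $e_i$ with probability exactly $y_{e_i}$. The probability that $e_i$ is the first existing edge in this order equals $\bigl(\prod_{k<i}(1-p_{e_k})\bigr) p_{e_i}$, which is precisely $f(S_i) - f(S_{i-1})$. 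Since both $S_i$ and $S_{i-1}$ are tight for $\mathbf{y}$, we have $f(S_i) - f(S_{i-1}) = \sum_{e \in S_i} y_e - \sum_{e \in S_{i-1}} y_e = y_{e_i}$, as desired. For edges $e \notin Y$, the probability of being output is $0 = y_e$.

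Finally, I define $\qcdist{a}$ to be the distribution that selects the permutation $\sigma^{(j)}$ with probability $\lambda_j$. Then the probability that Algorithm~\ref{alg:qc-query} outputs edge $e$ equals $\sum_j \lambda_j y^{(j)}_e = x^\ast_e$, and hence the expected weight of its output is $\sum_{e \in \adj{a}} x^\ast_e w(e)$. Sampling from $\qcdist{a}$ is polynomial-time since we have precomputed polynomially many $(\lambda_j, \sigma^{(j)})$ pairs. The main subtlety, and the step I would write most carefully, is the algorithmic convex decomposition into extreme points of $\qcpt{a}$; the probabilistic identity $f(S_i) - f(S_{i-1}) = y_{e_i}$ then drops out cleanly from the chain structure already established in \Cref{lem:qc-chain}.
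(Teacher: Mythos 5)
Your proposal is correct and follows essentially the same route as the paper: restrict $\mathbf{x}^\ast$ to $\qcpt{a}$, decompose it into a convex combination of extreme points (the paper cites the constructive Carath\'{e}odory theorem, Theorem~6.5.11 of Gr\"{o}tschel et al., which your iterative sketch is essentially a folklore version of, with submodular minimization serving as the separation oracle in both cases), use \Cref{lem:qc-chain} to read off a permutation from each extreme point's chain, and verify that the query order yields output probabilities $f(S_i) - f(S_{i-1}) = y_{e_i}$. The only detail the paper makes explicit that you leave implicit is how to actually recover the chain $S_1 \subsetneq \dots \subsetneq S_{|Y|}$ from an extreme point (the paper sets $S_{|Y|} = Y$ and peels off one element at a time by testing tightness), but this is a routine step.
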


\begin{proof}

Let $\mathbf{y} = (y_e)_{e \in \adj{a}}$ be an extreme point of 
$\qcpt{a}$ and let $|Y|$ be set of non-zero coordinates of $\mathbf{y}$.
Let $\emptyset = S_0 \subsetneq S_1 \subsetneq \dots \subsetneq S_{|Y|}$
be the chain of sets (for which Constraint~\eqref{cons:qc-f} is tight) 
guaranteed by Lemma~\ref{lem:qc-chain} for the extreme point $\mathbf{y}$. 
We can efficiently find the chain by first setting $S_{{Y}} = Y$, and 
iteratively recovering $S_{i-1}$ from $S_i$ by trying all possible $S_{i} 
\setminus \{e \}$ for $e \in S_i$ to check whether 
Constraint~\eqref{cons:qc-f} is tight.
For each $i = 1, \dots, |Y|$, let $e_i$ be the unique element in 
$S_{i} \setminus S_{i-1}$ and let $\sigma_y = (e_1, \dots, e_{|Y|})$. 
Notice that $S_i = \{e_1, \dots, e_i\}$, and hence $y_{e_i} = 
\sum_{e \in S_i}y_{e} - \sum_{e \in S_{i-1}} y_{e} = f(S_i) - f(S_{i-1})$.
Thus if we select $\sigma_y$ as the permutation in 
Algorithm~\ref{alg:qc-query} and query according to that order, the 
probability that it outputs the edge $e_i$ is exactly 
$$\Pr[\text{some edge in $S_i$ appears}] - \Pr[\text{some edge in $S_{i-1}$
  appears}] =f(S_i) - f(S_{i-1}) = y_{e_i}.$$

Note that the point $\mathbf{x}^\ast_a$ is contained in polytope $\qcpt{a}$.
Thus, using the constructive version of Caratheodary's theorem,  we can 
efficiently find a convex combination $\mathbf{x}^\ast_a = \sum_{i \in [k]}a_i 
\cdot \mathbf{y}^{(i)}$, where $a_i \geq 0$ for all $i \in [k]$, 
$\mathbf{y}^{(i)}$ is an extreme points of $\qcpt{a}$ for all $i \in [k]$, 
$\sum_{i \in [k]} a_i = 1$, and $k = \poly(|E_a|)$. 
This is because we can optimize a linear function over $\qcpt{a}$ in 
polynomial time using submodular minimization as a separation oracle,
and for such polytopes, the constructive version of Caratheodary's
theorem holds (See Theorem~6.5.11 of \cite{Grotschel1988}). 

Define the distribution $\qcdist{a}$ such that it gives permutation 
$\sigma_{\mathbf{y}^{(i)}}$  with probability $a_i$. 
If follows that, if we sample according to this distribution in 
Algorithm~\ref{alg:qc-query}, then for any fixed edge $e$, the probability 
that the algorithm outputs the edge $e$ is $\sum_{i \in [k]}a_i \cdot y^{(i)}_e  
= x^\ast_e$. 
Consequently, the expected weight of the output of 
Algorithm~\ref{alg:qc-query} is $\sum_{e \in E_a} x^\ast_e \cdot w(e)$.
\end{proof}

\subsection{Proposed algorithm and analysis}
\label{sec:qc-algo}

Suppose that we run Algorithm~\ref{alg:qc-query} for all vertices 
$a \in A$ and let $M'$ be the set of all output edges.
Then we have $$\EE \left[\sum_{e \in M'} w(e) \right] = \sum_{a \in A} 
\sum_{e \in \adj{a}} x^\ast_e \cdot w(e) = \sum_{e \in E} x^\ast_e \cdot w(e) 
\geq \opt.$$
Furthermore, $M'$ contains at most one adjacent edge per each vertex 
$a \in A$.
However $M'$ may contain more than one adjacent edge for some vertices 
$b \in B$, and hence it may not be valid matching.

Now again suppose that we run Algorithm~\ref{alg:qc-query} as described 
above for all vertices $a \in A$ in some arbitrary order. 
Consider some fixed vertex $b \in B$.
By~\Cref{lem:qc-distrbutions}, from the perspective of $b$, an edge $e \in \adj{b}$ appears with probability 
$x^\ast_e$ (when we say an edge $e = (a,b)$ appears, it means that 
Algorithm~\ref{alg:qc-query}, when run on vertex $a$, outputs the edge $e$).
Viewing the vertices in $A$ as buyers, we think of the appearance of an
edge $e = (a, b)$ as a buyer $a$ making a take-it or leave-it offer of 
value $w_e$ for item $b$.
Thus if we use a uniformly random order of vertices in $A$, picking an 
edge adjacent to the fixed vertex $b$ can be viewed as an instance of the 
prophet secretary problem.

The $(1 - 1/e)$-competitive algorithm algorithm given by 
Ehsani et al.~\cite{Ehsani2018} for the prophet secretary problem first sets 
a base price for the item. 
If some buyer comes at time $t \in [0, 1]$, and if the item is not already
sold, then the algorithm sells the item to this buyer if the offered price
is at least $(1 - e^{t - 1})$ times the base price.
Since the prophet secretary problem deals with a single item,
the goal is to choose the buyer with highest offer, and hence
they set base price of the item as the expected value of the maximum
offer.

However, rather than picking the maximum weighted edge adjacent to each $b$,
we want to maximize the total weight of the matching constructed. 
Thus, we set the base price $c_b$ for each $b$, not as the the expectation
of the offline secretary problem, but as the expected weight of the
edge adjacent to $b$ in some optimal offline maximum-weight bipartite 
matching.
To be concrete, we set $c_b = \sum_{e \in \adj{b}}x^\ast_e \cdot w(e)$ 
(recall that we can think of $x^\ast_e$ as the probability that some fixed 
optimal algorithm for maximum weighted bipartite matching in \qc 
model adds edge $e$ to its output).

\begin{algorithm}[!htb]
	\DontPrintSemicolon
	Solve $\qclp$ to get $\mathbf{x}^\ast$ and find the permutation 
		distributions 	$\qcdist{a}$ for all $a \in A$.\;
	For each vertex $a \in A$, select $t_a \in [0, 1]$ (arrival time) 
		independently and uniformly at random.\;
	For each vertex $b \in B$, set the base price $c_b = \sum_{e \in E_b} 
		x^\ast_e \cdot w(e).$\; 
	Let $M$ be an empty matching. \;
	\ForEach{vertex $a \in A$ in the increasing order of $t_a$}{
		Draw a permutation $\sigma$ of edges from $\qcdist{a}$. 
			\label{algstep:qc-draw-perm}\; 
		\ForEach{$e = (a,b)$ in the order of $\sigma$}{ 
			\label{algstep:qc-edge}
			\If{$w(e) \geq (1 - e^{t_a - 1}) \cdot c_b$ {\bf and } $b$ is not 	
				matched}{ 
				Query edge $e$ to check whether it exists. 
					\label{algstep:qc-query} \;
				If it exists, add it to $M$ and continue to next vertex in $A$. 
					\label{algstep:qc-pick}\;
				}
		  \Else{
		  		Flip a coin that give \textsc{Heads} with probability $p_e$. 
		  			\label{algstep:qc-simquery}\;
		  		If \textsc{Heads}, continue to next vertex in $A$.
					\label{algstep:qc-simpick}\; 
			} 
		}
	}
	\KwRet the matching $M$. \;
  \caption{Outline of \approxqc.}
  \label{alg:qc-matching}
\end{algorithm}

We present the pseudo-code of our algorithm \approxqc in
Algorithm~\ref{alg:qc-matching}.
We start by independently assigning each $a \in A$ a uniformly random
arrival time $t_a \in [0,1]$, and then for each vertex $a \in A$ in 
the order of the arrival time, we run a slightly modified version of the 
query algorithm given in Algorithm~\ref{alg:qc-query}.
For each $b \in B$, we pick an edge $e = (a,b)$ if it appears and if its
weight exceeds the threshold $(1 - e^{t_a - 1}) \cdot c_b$.
Since a \qc algorithm is committed to adding any queried edge that
exists, we query an edge $e$ only if $e \cap B$ is not already assigned to 
some other vertex $a \in A$ and its weight $w(e)$ exceeds the threshold.
But we still need to make sure that, for a fixed $b$, edges 
$e \in \adj{b}$ appears (in the sense that if we run 
Algorithm~\ref{alg:qc-query}, it outputs the edge $e$) with probability 
$x^\ast_e$.
Hence we have the {\bf else} clause of the conditional in 
Algorithm~\ref{alg:qc-matching} that simulates the behavior of 
Algorithm~\ref{alg:qc-query} in the cases we decide not to actually query 
an edge.

We conclude this section with Theorem~\ref{thm:qc-alg} which shows that
our algorithm \approxqc is $(1 - 1/e)$-approximate.
The proof follows exactly the same lines (except for the definition of
base price $c_b$) as in Ehsani et al.~\cite{Ehsani2018} to show that the 
expected weight of the edge adjacent to a fixed vertex $b \in B$ in the
output of \approxqc is at least $(1 - 1/e) \cdot c_b$.
Then by the linearity of expectation, the expected utility of \approxqc is
at least $(1 - 1/e)  \cdot \sum_{b \in B} c_b = (1 - 1/e) \cdot \sum_{e \in E}
x^\ast_e \cdot w(e) \geq \opt$ (recall that $c_b = \sum_{e \in \adj{b}} x^
\ast_e \cdot w(e)$). 

\begin{theorem} \label{thm:qc-alg}
The expected utility $\qcutil(\approxqc)$ of the algorithm \approxqc is at 
least $(1 - 1/e) \cdot \opt$.
\end{theorem}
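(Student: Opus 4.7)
The plan is to prove a per-vertex bound: for every $b \in B$, the expected weight of the edge of $M$ incident to $b$ is at least $(1-1/e)\cdot c_b$. Summing over $b$ by linearity of expectation, using the definition of $c_b$, and applying Lemma~\ref{lem:qc-ub}, this immediately gives $\qcutil(\approxqc) \geq (1-1/e)\sum_{b\in B} c_b = (1-1/e)\sum_{e\in E} x^\ast_e w(e) \geq (1-1/e)\opt$, as required.

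Fix $b\in B$ and write $\phi(t) := 1 - e^{t-1}$. The first, more delicate task is to observe that from $b$'s perspective Algorithm~\ref{alg:qc-matching} induces a clean continuous-time offer model on $\adj{b}$: each ``buyer'' $a\in A$ arrives at time $t_a \sim U[0,1]$, independently presents value $w(a,b)$ with probability $x^\ast_{(a,b)}$ (and $0$ otherwise), and the algorithm accepts the first above-threshold offer (threshold $\phi(t_a)\, c_b$) while $b$ is unmatched. The else-clause on lines~\ref{algstep:qc-simquery}--\ref{algstep:qc-simpick} is designed exactly to preserve these marginals: whether or not the algorithm actually queries edge $(a,b)$, the internal coin of bias $p_{(a,b)}$ simulates the query so that, by \Cref{lem:qc-distrbutions}, edge $(a,b)$ ``appears'' with probability $x^\ast_{(a,b)}$, independently across $a$ and independently of the state of $b$.

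Let $g(t) := \Pr[b\text{ unmatched at time }t]$, $r(t) := \sum_{a} x^\ast_{(a,b)}\Ind[w(a,b) \geq \phi(t) c_b]$, and $H(t) := \sum_{a} x^\ast_{(a,b)} w(a,b)\Ind[w(a,b) \geq \phi(t) c_b]$. A standard continuous-time argument gives $g'(t) = -g(t) r(t)$ with $g(0)=1$, so $R_b := \E[w(\text{edge of }M\text{ at }b)] = \int_0^1 g(t) H(t)\, dt$. The crux is the per-time-slice inequality
\[
H(t) \;\geq\; c_b\, e^{t-1} + \phi(t)\, c_b\, r(t),
\]
which follows from the $\qclp$ constraint $\sum_{a} x^\ast_{(a,b)} \leq f(\adj{b}) \leq 1$ obtained by taking $F = \adj{b}$: each sub-threshold edge contributes at most $x^\ast_{(a,b)}\phi(t) c_b$ to $c_b - H(t)$, so $c_b - H(t) \leq \phi(t) c_b (1 - r(t))$, and using $1-\phi(t)=e^{t-1}$ rearranges this to the claim. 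Multiplying by $g(t)$ and integrating, the second summand becomes $-c_b \int_0^1 \phi(t) g'(t)\, dt$, and integration by parts with $\phi'(t) = -e^{t-1}$, $\phi(0) = 1-1/e$, $\phi(1) = 0$ turns it into $c_b(1 - 1/e) - c_b\int_0^1 g(t) e^{t-1}\, dt$. The two $\int g\, e^{t-1}$ contributions cancel, leaving $R_b \geq (1-1/e)\, c_b$ exactly.

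The main obstacle is the first step. Algorithm~\ref{alg:qc-matching} does not query independently at each $b$: queries at $a$ are coupled across the $b$'s, and for a fixed $b$ the decision to query $(a,b)$ depends on whether $b$ is already matched. Nevertheless, we need the marginal appearance process at $b$ to match the unconditional one in \Cref{lem:qc-distrbutions}. The else-clause is crafted precisely for this decoupling: by flipping an independent coin of the same bias as the edge would have if queried, it ensures that whether or not the real query is made, the event ``edge $(a,b)$ terminates $a$'s scan of $\sigma$'' has probability $x^\ast_{(a,b)}$, independent of $b$'s state. Once this decoupling is in place, the $(1-1/e)$ factor drops out of the integration by parts thanks to the tailor-made threshold $\phi(t) = 1 - e^{t-1}$; modulo the substitution of the LP value in place of the expected maximum, the calculation mirrors the prophet-secretary analysis of Ehsani et al.~\cite{Ehsani2018}.
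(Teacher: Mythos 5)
Your high-level plan (per-vertex bound of $(1-1/e)\,c_b$, linearity of expectation, $\qclp$ upper bound) and your per-time-slice inequality $H(t) \geq c_b e^{t-1} + \phi(t)\,c_b\,r(t)$ both match what the paper does: the latter is exactly the "slice" version of the paper's split into the surplus part $M_{a,b}$ and the threshold part $N_{a,b}$ (carried out in the proof of Theorem~\ref{thm:poiutility}, to which the paper defers).

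The gap is the claim $g'(t) = -g(t)\,r(t)$, and it is fatal for your chain of inequalities. This ODE would follow from a memoryless (Poisson-type) arrival process, but here there are finitely many buyers, each with a single uniform arrival time on $[0,1]$. Since the appearance events are independent across $a$, write $q_a(t) := \Pr[a \text{ matches } b \text{ by time } t] = \int_0^t x^\ast_{(a,b)} \Ind[w(a,b) \geq \phi(s)c_b]\,ds$; then $g(t) = \prod_a (1 - q_a(t))$ and
\[
-g'(t) \;=\; g(t) \sum_a \frac{q_a'(t)}{1-q_a(t)} \;\geq\; g(t)\,r(t),
\]
with strict inequality once some $q_a(t) > 0$. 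Equivalently, $\Pr[b\text{ unmatched at }t \mid t_a = t] = g(t)/(1-q_a(t)) \geq g(t)$, a comparison the paper makes explicitly. So $\int_0^1 g\,\phi\,r\,dt \leq -\int_0^1 \phi\,g'\,dt$, which is the wrong direction: after your first two steps you have $R_b \geq c_b\int g e^{t-1} + c_b\int g\,\phi\,r$, and this lower bound is \emph{at most} $c_b(1-1/e)$, not at least. For a concrete counterexample take a single buyer $a$ with $x^\ast_{(a,b)} = 1$ and $w(a,b) = c_b$: then $g(t) = 1-t$, $r \equiv 1$, so $\int_0^1 g\,H\,dt = c_b/2 < (1-1/e)c_b$, even though the true $R_b = c_b$. (This also shows that $R_b = \int g\,H$ is false; one only has $R_b \geq \int g\,H$.)

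The repair is the one the paper uses: do not pass the whole reward through $\int g\,H$. Split the reward at $b$ into the threshold part $\sum_a \indpick{a,b}\,\phi(t_a)\,c_b$ and the surplus part $\sum_a \indpick{a,b}\,(w(a,b) - \phi(t_a)\,c_b)$. For the threshold part, integrate the \emph{exact} match-time density $-g'(t)$ against $\phi(t)\,c_b$ and apply integration by parts; this gives $c_b(1-1/e) - c_b\int_0^1 g(t) e^{t-1}\,dt$ with no loss. The inequality $\Pr[b\text{ unmatched at }t \mid t_a = t] \geq g(t)$ is only used in the surplus part, where the integrand is nonnegative, and combined with $\sum_a x^\ast_{(a,b)} \leq 1$ it yields at least $c_b\int_0^1 g(t)e^{t-1}\,dt$. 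The two integrals then cancel exactly.
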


\begin{remark}
For the sake of completeness, we reproduce the analysis of Ehsani et al.~\cite{Ehsani2018} in 
Section~\ref{sec:poi} for our more general algorithm in the price of information model (see Theorem~\ref{thm:poiutility}).
\end{remark}

\section{Extension to the Price of Information Model}
\label{sec:poi}

In this section, we present a $(1 - 1/e)$-approximation algorithm for the 
maximum weight bipartite matching problem (\mwbm) in the price of information 
(PoI) model introduced by Singla~\cite{Singla2018}.
Our strategy is essentially the same as that used in Section~\ref{sec:qc} 
for the \qc model except for a few enhancements.

Let $G = (A \cup B, E)$ be a bipartite graph where each edge $e \in E$ 
independently takes some random weight $X_e$ from a known probability
distribution. 
The weight distributions can be different for different edges and are independent.
To find the realization of $X_e$ (i.e., the actual weight of the edge $e$), we 
have to \emph{query} the edge $e$ at a cost of $\pi_e$.  
Consider an algorithm $\mathcal{A}$ that queries a subset $Q$ of edges $E$ and
outputs a valid matching $M \subseteq Q$. 
We call such an algorithm a PoI algorithm for \mwbm.
We define the expected PoI utility of Algorithm~$\mathcal{A}$ as
\begin{align*}
\poiutil(\calA) := \E \left[\sum_{e \in M} X_e - \sum_{e \in Q} \pi_e\right],
\end{align*}
where the expectation is taken over the randomness of $X_e$'s and any
internal randomness of the algorithm.
The goal is to design a polynomial-time PoI algorithm $\calA$ for \mwbm 
such that its expected PoI utility $\poiutil(\calA)$ is maximized. 
Let $\opt = \max \, \poiutil(\mathcal{A})$ denote the optimal expected
PoI utility, where the maximization is over all PoI algorithms.

Let $E=\{e_1, \dots, e_m\}$, and let $\mathbf{X} = (X_{e_1}, \dots, X_{e_m})$.
Let $\calM$ be the collection of all valid bipartite matchings in $G$.  
The following lemma is due to Singla \cite{Singla2018}.

\begin{lemma} \label{lem:poi-ub}
For each edge $e \in E$, let $\tau_e$ be the solution to the equation
$\E[\max\{(X_e - \tau_e),0\}] = \pi_e$ and let $Y_e = \min(X_e, \tau_e)$.
Then the optimal expected PoI utility $\opt$ is upper bounded by
$\E_{\mathbf{X}}\left[ \max_{M \in \calM} \sum_{e \in M}Y_e \right]$.
\end{lemma}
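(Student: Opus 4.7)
The plan is to fix an arbitrary PoI algorithm $\calA$ with (random) query set $Q$ and output matching $M \subseteq Q$, and show directly that $\poiutil(\calA) \leq \E\bigl[\max_{M' \in \calM}\sum_{e \in M'}Y_e\bigr]$. The key identity is the Weitzman-style decomposition $X_e = Y_e + (X_e - \tau_e)^+$, where $(z)^+ := \max(z,0)$ and $Y_e = \min(X_e,\tau_e)$. Note that $\tau_e$ is well-defined because $t \mapsto \E[(X_e - t)^+]$ is continuous and non-increasing from $\E[X_e]$ down to $0$; we may harmlessly assume $\pi_e \leq \E[X_e]$, as otherwise querying $e$ is never profitable and such edges can be removed from $E$.

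First I apply the decomposition edge-by-edge to split the utility as
\begin{align*}
\sum_{e \in M}X_e - \sum_{e \in Q}\pi_e = \sum_{e \in M}Y_e + \sum_{e \in M}(X_e - \tau_e)^+ - \sum_{e \in Q}\pi_e.
\end{align*}
Since $M \subseteq Q$ and $(X_e - \tau_e)^+ \geq 0$, I enlarge the middle sum to be over $Q$, obtaining
\begin{align*}
\sum_{e \in M}X_e - \sum_{e \in Q}\pi_e \;\leq\; \sum_{e \in M}Y_e + \sum_{e \in Q}\bigl[(X_e - \tau_e)^+ - \pi_e\bigr].
\end{align*}

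The crux is showing that the expectation of the rightmost sum vanishes. Rewriting it as $\sum_{e \in E}\Ind[e \in Q]\cdot\bigl[(X_e - \tau_e)^+ - \pi_e\bigr]$, the key observation is that a PoI algorithm must decide whether to query $e$ using only the information available before that query, namely the realizations $X_{e'}$ of edges $e'$ queried earlier together with any internal randomness. Therefore $\Ind[e \in Q]$ is measurable with respect to $\sigma(\{X_{e'}\}_{e' \neq e})$ and, since the $X_{e'}$'s are mutually independent, $\Ind[e \in Q]$ is independent of $X_e$. The defining equation of $\tau_e$ then yields $\E\bigl[\Ind[e \in Q]\cdot((X_e-\tau_e)^+ - \pi_e)\bigr] = \Pr[e \in Q]\cdot 0 = 0$ for every edge $e$.

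Combining this cancellation with the previous inequality and using $M \in \calM$ deterministically gives $\poiutil(\calA) \leq \E\bigl[\sum_{e \in M}Y_e\bigr] \leq \E\bigl[\max_{M' \in \calM}\sum_{e \in M'}Y_e\bigr]$, and maximizing over $\calA$ yields the claim. The main obstacle is the independence argument: it relies crucially on non-clairvoyance, i.e., that the algorithm cannot peek at $X_e$ before deciding whether to query $e$. Without this property the cancellation in the expectation would fail, and the Weitzman substitution would not correctly ``charge'' the query cost $\pi_e$ to the surplus $(X_e - \tau_e)^+$ realized upon querying.
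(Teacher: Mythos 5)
The paper does not prove this lemma; it simply attributes it to Singla~\cite{Singla2018}. Your proof is correct and reproduces the standard Weitzman-style argument that underlies Singla's result: the surrogate decomposition $X_e = Y_e + (X_e - \tau_e)^+$, enlarging the surplus sum from $M$ to the queried set $Q$ using $M\subseteq Q$ and nonnegativity, and then the key independence observation that $\Ind[e\in Q]$ is measurable with respect to $\{X_{e'}\}_{e'\neq e}$ (and internal randomness) because a non-clairvoyant algorithm must commit to querying $e$ before seeing $X_e$, so the defining equation $\E[(X_e-\tau_e)^+]=\pi_e$ cancels the query cost in expectation. One very minor point of rigor: $\Ind[e\in Q]$ is measurable with respect to the $\sigma$-algebra generated by $\{X_{e'}\}_{e'\neq e}$ \emph{and} the algorithm's internal coins, not just the former; but since those coins are independent of $X_e$, the factorization $\E[\Ind[e\in Q]\cdot((X_e-\tau_e)^+-\pi_e)]=\Pr[e\in Q]\cdot 0$ still holds. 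Your aside about existence of $\tau_e$ (continuity and monotonicity of $t\mapsto\E[(X_e-t)^+]$, and discarding edges with $\pi_e>\E[X_e]$) is a sensible way to tidy up that detail. Overall this is a faithful, self-contained proof of a result the paper states without proof.
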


To derive our algorithm, we go through the same two stages as in 
Section~\ref{sec:qc}.
We first construct a linear program (LP), this times defining the constraints
using the probability distributions of $Y_e$'s (that were defined in 
Lemma~\ref{lem:poi-ub}), and use its value together with 
Lemma~\ref{lem:poi-ub} to upper bound $\opt$.
For this, we discretize the distributions of $Y_e$'s, and in contrast to the 
\qc setting, we now define variables $x_{e,v}$ for each edge-value pair 
$(e,v)$; 
We think of $x_{e,v}$ as the joint probability that $Y_e = v$ and some fixed 
optimal PoI algorithm includes edge $e$ in its output.
The objective value of this LP upper bounds the quantity
$\E_{\mathbf{X}}\left[ \max_{M \in \calM} \sum_{e \in M}Y_e \right]$, which
in turn is an upper bound of the optimal expected PoI utility as stated in
Lemma~\ref{lem:poi-ub}.
We describe the construction of our LP in Section~\ref{sec:poi-ub}.
Next, in Section~\ref{sec:poi-structure}, we analyze the structure of our new 
LP as we did in the previous section and use it to define analogous 
probability distributions over subsets of edge-value pairs. 
Finally, in Section~\ref{sec:poi-alg} we put everything together to construct 
our $(1 - 1/e)$-approximate PoI algorithm for \mwbm.

\subsection{Upper-bounding the optimal expected utility.}
\label{sec:poi-ub}

Assume that the distributions of $Y_e$ are discrete\footnote{We can e.g.
  achieve this by geometric grouping into polynomially many classes.}.  
For each $e \in E$, let $V_e$ denote the set of possible values of $Y_e$.  
For each vertex $u \in A \cup B$, let $\ev{u} = \{(e, v) : e \in \delta(u), 
v \in V_e\}$ be the set of all edge-value pairs for all edges incident to $u$. 
Let $\allev = \cup_{u \in A} \ev{u}$ be the set of all edge-value pairs.  
For each edge $e \in E$ and value $v \in V_e$, let $p_{e,v}$ be the 
probability that $Y_e = v$, and for a set $F \subseteq \allev$, let $f(F)$ 
be the probability that $Y_e = v$ for at least one edge-value pair 
$(e,v) \in F$.


Fix any PoI algorithm $\calA$ for \mwbm.
For each edge-value pair $(e,v) \in \allev$, let $A_{e,v}$ be the event 
that $Y_e = v$ and $\calA$ includes edge $e$ in its output, and let 
$x_{e,v} = \Pr[A_{e,v}]$.  
Now fix a vertex $u \in A \cup B$ and a set $F \subseteq \ev{u}$.  
Then $\Pr[ \cup_{(e,v) \in F} A_{e,v}] \leq \Pr[ Y_e = v \text{ for some } 
(e,v) \in F] = f(F)$.  
But since all events $A_{e,v}$ for $(e,v) \in \ev{u}$ are mutually disjoint
(as with the \qc setting, the algorithm $\mathcal{A}$ outputs a valid 
matching and thus the output has at most one edge incident to vertex $u$),
$\Pr[\cup_{(e,v) \in F} A_{e,v}] = \sum_{(e,v) \in F} \Pr[A_{e,v}] = 
\sum_{(e,v)  \in F} x_{e,v}$.  
Thus we have $\sum_{(e,v) \in F} x_{e,v} \leq f(F)$, and this must be true 
for all $u \in A \cup B$ and $F \subseteq \ev{u}$.

Now consider the following LP, which we call $\poilp$.
\begin{align*}
&\text{Maximize } & \sum_{(e,v) \in \allev} x_{e,v} \cdot v,  & & & \\
&\text{subject to } & \sum_{(e,v) \in F} x_{e,v} & \leq f(F) & 
   \text{ for all } F \subseteq \ev{u} \text{ for all } u \in A \cup B,  & \\
&	& x_{e,v} & \geq 0 & \text{for all } (u,v) \in \allev.&
\end{align*}

We have the following lemma concerning $\poilp$.

\begin{lemma} \label{lem:poi-lp-ub}
The optimal expected PoI utility $\opt$ is upper bounded by the value 
of $\poilp$.
\end{lemma}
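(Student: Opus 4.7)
\textbf{Proof plan for Lemma~\ref{lem:poi-lp-ub}.}
The plan is to chain two upper bounds: first apply Lemma~\ref{lem:poi-ub} to replace $\opt$ with the purely offline quantity $\E_{\mathbf{X}}[\max_{M \in \calM} \sum_{e \in M} Y_e]$, and then exhibit a feasible solution to $\poilp$ whose objective value equals this offline quantity. The first reduction is immediate from Lemma~\ref{lem:poi-ub}, so the real work is producing the feasible LP solution.

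For the LP construction, I would use as the witness the fully-informed offline algorithm that, given the realization $\mathbf{Y} = (Y_e)_{e \in E}$, picks $M^{\star}(\mathbf{Y}) \in \argmax_{M \in \calM} \sum_{e \in M} Y_e$ (breaking ties in an arbitrary but deterministic way). Define
\[
  x_{e,v} \;:=\; \Pr\bigl[\,Y_e = v \text{ and } e \in M^{\star}(\mathbf{Y})\,\bigr]
  \qquad \text{for each } (e,v) \in \allev.
\]
Then $\sum_{(e,v) \in \allev} x_{e,v}\cdot v = \E\bigl[\sum_{e \in M^{\star}(\mathbf{Y})} Y_e\bigr] = \E[\max_{M \in \calM} \sum_{e \in M} Y_e]$, so the objective value of this LP assignment matches exactly the offline optimum that bounds $\opt$.

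It remains to check feasibility, and this is where the argument given in the paragraph preceding the LP statement applies almost verbatim (I would simply specialize it to the offline algorithm above). Fix $u \in A \cup B$ and $F \subseteq \ev{u}$, and let $A_{e,v}$ denote the event $\{Y_e = v$ and $e \in M^{\star}(\mathbf{Y})\}$. The events $\{A_{e,v}\}_{(e,v) \in F}$ are pairwise disjoint: two pairs with the same edge $e$ but different values are incompatible because $Y_e$ is a single realized value, while two pairs with distinct edges $e \ne e'$ cannot both be in $M^{\star}$ since $M^{\star}$ has at most one edge incident to $u$. Hence
\[
  \sum_{(e,v) \in F} x_{e,v} \;=\; \Pr\!\Bigl[\,\bigcup_{(e,v) \in F} A_{e,v}\Bigr]
  \;\le\; \Pr\!\bigl[\,Y_e = v \text{ for some } (e,v) \in F\,\bigr] \;=\; f(F),
\]
and nonnegativity is immediate. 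So $(x_{e,v})$ is feasible, its objective equals $\E[\max_M \sum_{e \in M} Y_e] \ge \opt$ (the last inequality by Lemma~\ref{lem:poi-ub}), and the value of $\poilp$ is at least this, which gives the lemma.

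The only mildly delicate point is the disjointness above, and in particular the case where $u$ is the endpoint shared by two edges $e \ne e'$ in $F$: one must use the matching constraint at $u$ to rule out $e, e' \in M^{\star}$ simultaneously. Beyond that, the proof is a straightforward adaptation of the query-commit LP argument (Lemma~\ref{lem:qc-ub}) to the richer edge-value index set $\allev$.
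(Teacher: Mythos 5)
Your proof is correct and follows essentially the same route as the paper: invoke Lemma~\ref{lem:poi-ub}, take as witness the joint probability that $Y_e = v$ and $e$ lies in an offline maximum-weight matching for $\mathbf{Y}$, verify feasibility via disjointness (the matching constraint at $u$ for distinct edges, single realization of $Y_e$ for distinct values), and observe that the objective of this feasible point equals $\E[\max_{M}\sum_{e\in M}Y_e]$. You spell out the disjointness check a bit more explicitly than the paper, which just references the general feasibility argument given in the surrounding text, but the idea and structure are the same.
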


\begin{proof}
By Lemma~\ref{lem:poi-ub}, we have that $\opt \leq \E_{\mathbf{X}}
[\max_{M \in \mathcal{M}} \sum_{e \in M}Y_e]$.
Now consider an algorithm $\mathcal{A}$ that queries $Y_e$ for all edges 
$e \in E$ and outputs a maximum weighted bipartite matching $M$ of $G$.
Setting $x_{e,v}$ to be the joint probability that $Y_e = v$ and 
$e \in M$ for each edge-value pair $(e,v) \in \allev$ gives a feasible 
solution to $\poilp$. 
Hence $\poiutil(\mathcal{A}) =  \E_{\mathbf{X}}[\max_{M \in \mathcal{M}} 
\sum_{e \in M}Y_e] \leq \sum_{(e,v) \in \allev} x^\ast_{e,v} \cdot v$, 
where $\mathbf{x}^\ast = (x^\ast_{e,v})_{(e,v) \in \allev}$ is an optimal
solution of $\poilp$.
\end{proof}

\subsection{Structure of the LP and its implications.}
\label{sec:poi-structure}

Now we analyze the structure of $\poilp$. 
Our analysis closely follows that of Section~\ref{sec:qc-lp-structure}.

As usual, $f$ is a coverage function and hence it is submodular.
Thus, for each vertex $u \in A \cup B$, we can use submodular minimization 
to check whether any constraint of the form $\sum_{(e,v) \in F} x_{e,v} \leq 
f(F)$ is violated for any subset $F \subseteq \ev{u}$. 
This yields Lemma~\ref{lem:poi-solve-lp} below.

\begin{lemma} \label{lem:poi-solve-lp}
The linear program $\poilp$ is solvable in polynomial-time.
\end{lemma}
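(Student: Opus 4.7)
The plan is to follow exactly the same pattern as the proof of Lemma~\ref{lem:qc-solve-lp}: exhibit a polynomial-time separation oracle for $\poilp$ and then invoke the ellipsoid method. For each $u \in A \cup B$, one rewrites the constraint $\sum_{(e,v)\in F} x_{e,v}\le f(F)$ as $g_u(F):=f(F)-\sum_{(e,v)\in F} x_{e,v}\ge 0$, so the separation question becomes: does there exist $u$ and $F\subseteq \ev{u}$ with $g_u(F)<0$? If $g_u$ is submodular, then since $|A\cup B|=O(n)$, we can answer this by $O(n)$ calls to a submodular-function-minimization subroutine, which is polynomial.

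The only genuinely new point compared with the query-commit version is establishing submodularity of $f$ on the ground set $\ev{u}$. In the \qc setting the argument was immediate because the atoms were independent Bernoulli events, but here edge-value pairs sharing the same edge are \emph{mutually exclusive}, not independent. I would therefore justify submodularity by the standard coverage-function interpretation: fix $u$, let $\Omega_u$ be the sample space of the joint realization of $(Y_e)_{e\in\delta(u)}$, and associate to each $(e,v)\in \ev{u}$ the event $E_{e,v}=\{Y_e=v\}\subseteq \Omega_u$ of probability $p_{e,v}$. Then $f(F)=\Pr[\bigcup_{(e,v)\in F} E_{e,v}]$ is exactly the weighted coverage of $\bigcup_{(e,v)\in F} E_{e,v}$ under the probability measure, and weighted coverage functions are submodular irrespective of how the underlying events overlap. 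Since $\sum_{(e,v)\in F} x_{e,v}$ is modular, $g_u=f-\sum x_{e,v}$ is submodular.

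I would then note that $g_u(F)$ can be evaluated in polynomial time: for each edge $e$ incident to $u$ and each $F_e:=\{v:(e,v)\in F\}$, $\Pr[Y_e\in F_e]=\sum_{v\in F_e} p_{e,v}$, and by independence across edges $f(F)=1-\prod_{e\in \delta(u)}(1-\sum_{v\in F_e} p_{e,v})$; subtracting $\sum_{(e,v)\in F} x_{e,v}$ is trivial. Combined with polynomial-time submodular minimization (e.g.\ the result cited in the introduction), this yields a polynomial-time separation oracle. Feasibility of the nonnegativity constraints $x_{e,v}\ge 0$ is checked directly. Plugging this oracle into the ellipsoid method solves $\poilp$ in polynomial time.

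The main (minor) obstacle I anticipate is precisely the submodularity verification, because the naive intuition from the \qc proof—that $f$ is a coverage function built from independent Bernoullis—no longer applies atom by atom; the coverage-over-$\Omega_u$ viewpoint is the clean way around this. Everything else is a direct transcription of the earlier argument.
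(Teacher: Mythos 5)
Your proposal is correct and follows essentially the same route as the paper: the paper also justifies polynomial-time solvability of $\poilp$ by observing that $f$ is a coverage function (hence submodular) and then using submodular minimization as a separation oracle for the ellipsoid method. The only difference is that you spell out the coverage-over-$\Omega_u$ viewpoint more explicitly to address the non-independence of edge-value atoms sharing an edge, whereas the paper simply asserts ``as usual, $f$ is a coverage function.''
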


We proceed as follows.
Consider the query strategy given in Algorithm~\ref{alg:poi-query} that
queries edges incident to a fixed vertex $a \in A$ in some random order.
This is the PoI version of the Algorithm~\ref{alg:qc-query} given for the
\qc setting.
Following (almost) the  same procedure as in 
Section~\ref{sec:qc-lp-structure}, we find distributions $\poidist{a}$ 
that makes Algorithm~\ref{alg:poi-query} pick an edge $e$ that has value $v$ 
with probability $x^\ast_{e,v}$, and then use those to construct a PoI 
algorithm  for \mwbm that gives $(1 - 1/e)$ approximation guarantee.
But the issue here is that Algorithm~\ref{alg:poi-query} considers 
the distributions of $Y_e$'s and does not pay query costs whereas our final
approximate PoI algorithm needs to consider the distributions of $X_e$'s and 
has to pay query costs.

\begin{algorithm}[H]
	\DontPrintSemicolon
	Let $z_e = \operatorname{Null}$ for all $e \in \adj{a}$.\;
	Draw a permutation $\sigma$ from $\poidist{a}$. 
		\label{algstep:poi-select-permutation} \; 
	\ForEach{$(e, v)$ in the order of $\sigma$}{ 
	If $z_e = \operatorname{Null}$, draw $z_e$ from a distribution identical 
		to that of $Y_e$.\; 
	If $z_e = v$, output $e$ and terminate.
	}
	\caption{Query algorithm for selecting an edge incident to a fixed vertex 
		$a \in A$.}
  \label{alg:poi-query}
\end{algorithm}

Now consider the way we defined $\tau_e$ (which we used to define 
$Y_e$'s), and observe that the values of $X_e$ above the threshold 
$\tau_e$, on expectation, covers the cost $\pi_e$ of querying it.
Thus, if we can make sure that the first time we query an edge $e$ (i.e.,
the time where we pay the price $\pi_e$) in 
Algorithm~\ref{alg:poi-query} is for the value $\tau_e$, then
we can still use it to construct our final PoI matching algorithm
(where we actually query $X_e$ values, and when an edge $e$ is queried 
for the first time for value $\tau_e$, in expectation we actually get
a net value of $\tau_e$ with probability $x^\ast_{e, \tau_e}$ after paying 
$\pi_e$).
Using a careful construction, we make sure that distributions $\poidist{a}$ 
only gives those permutations where for any edge $e$, 
the pair $(e, \tau_e)$ appears before any other pair $(e, v)$.

For such a construction, we consider a slightly different polytope $\poipt{a}$ 
(as opposed to how we defined $\qcpt{a}$) for each $a \in A$.
Fix a vertex $a \in A$, and consider the family $\lev{a}$ of subsets of 
$\ev{a}$ defined as follows:
$$\lev{a} := \{F \subseteq \ev{a}: (e,v) \in F \Rightarrow (e,v') \in F 
\text{ for all } v' \geq v \text{ such that } (e,v') \in \ev{a} \}.$$ 
I.e., $\lev{a}$ is a family of subsets of $\ev{a}$ that satisfy the 
following: If a set $F$ of edge-value pairs is in $\lev{a}$ and an edge-value 
pair $(e,v)$ is in $F$, then $F$ also contains all edge-value pairs for the 
same edge $e$ having values greater than $v$.  
It is easy to verify that if $A, B \in \lev{a}$ then both $A \cup B \in 
\lev{a}$ and $A \cap B \in \lev{a}$, which makes $\lev{a}$ a lattice family.
(I.e., the sets in $\lev{a}$ forms a lattice where intersection and union 
serve as \emph{meet} and \emph{join} operations respectively.) 

We define below the polytope $\poipt{a}$ using a constraint for each set in the
family $\lev{a}$. 
\begin{align}
\sum_{(e,v) \in F} x_{e,v} & \leq f(F)  & \text{ for all } F \in \lev{a} 
 \label{cons:poi-f}\\
x_{e,v} & \geq 0 & \text{for all } (e,v) \in \ev{a}. \nonumber
\end{align}

Analogous to our assumption $0 < p_e < 1$ for all $e \in E$ for the 
\qc setting, we now assume that each $p_{e,v} > 0$ and
for each edge $e$, $\sum_{v \in V_e} p_{e,v} < 1$. 
(I.e., we can assume that with some small probability $p_{e,\star}$ the 
edge $e$ does not exist or equivalently, we can also assume $p_{e,0} > 0$ 
and $0 \notin V_e$.
We omit the details, but one can use the same argument of re-scaling the 
probabilities to justify this assumption.)
Under these assumptions on $p_{e,v}$'s, we have the following lemma.
The proof resembles that of Lemma~\ref{lem:qc-strict-submod} from
Section~\ref{sec:qc-lp-structure}, and we defer it to 
Appendix~\ref{app:proofs}.

\begin{lemma} \label{lem:poi-strict-submod}
Fix a vertex $a \in A$.
If $p_{e,v} > 0$ for all $(e,v) \in \ev{a}$ and
$\sum_{v \in V_e} p_{e,v} < 1$ for all $e \in \adj{a}$, the function $f$ 
is strictly submodular and strictly increasing on the lattice family 
$\lev{a}$.
Formally,
\begin{enumerate}
\item \label{prop:a1} For any $A, B \in \lev{a}$ such that $A \setminus B 
\neq \emptyset$ and $B \setminus A \neq \emptyset$, $f(A) + f(B) > f(A \cap B) 
+ f(A \cup B)$, and
\item \label{prop:a2} For any $A \subsetneq B \subseteq \ev{a}$, 
$f(B) > f(A)$.
\end{enumerate}
\end{lemma}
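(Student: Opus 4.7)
The plan is to mirror the proof of Lemma~\ref{lem:qc-strict-submod} while replacing its disjoint decomposition of $A, B \subseteq \adj{u}$ into $A \cap B$, $A \setminus B$, $B \setminus A$ by a per-edge decomposition that exploits the lattice structure: for $A, B \in \lev{a}$, the slices $A_e := \{v : (e,v) \in A\}$ and $B_e$ are both upward-closed subsets of the totally ordered set $V_e$, hence comparable. To set this up, I would define, for each $F \subseteq \ev{a}$ and each $e \in \adj{a}$, the quantities $F_e := \{v : (e,v) \in F\}$ and $q_e(F) := \sum_{v \in F_e} p_{e, v}$. Independence of the $Y_e$'s together with disjointness of the events $\{Y_e = v\}$ for distinct $v$ yields the product formula
\[
  g(F) \;:=\; 1 - f(F) \;=\; \prod_{e \in \adj{a}} \bigl(1 - q_e(F)\bigr),
\]
and each factor lies strictly between $0$ and $1$ by the assumptions $p_{e,v} > 0$ and $\sum_{v \in V_e} p_{e,v} < 1$.

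For Property~\ref{prop:a1}, I would fix $A, B \in \lev{a}$ with $A \setminus B \neq \emptyset$ and $B \setminus A \neq \emptyset$, and write $\alpha_e := 1 - q_e(A)$, $\beta_e := 1 - q_e(B)$. The per-edge comparability gives $1 - q_e(A \cup B) = \min(\alpha_e, \beta_e)$ and $1 - q_e(A \cap B) = \max(\alpha_e, \beta_e)$. Next, partition $\adj{a}$ into $E_{<}, E_{=}, E_{>}$ according to the sign of $\alpha_e - \beta_e$: an edge-value pair witnessing $A \setminus B \neq \emptyset$ forces some $e$ with $A_e \supsetneq B_e$, hence (using $p_{e,v} > 0$) $q_e(A) > q_e(B)$ strictly, i.e., $e \in E_{<}$; symmetrically $E_{>} \neq \emptyset$. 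Factoring out the positive product over $E_{=}$ and setting $x = \prod_{E_<}\alpha_e$, $y = \prod_{E_<}\beta_e$, $X = \prod_{E_>}\alpha_e$, $Y = \prod_{E_>}\beta_e$, the target $g(A) + g(B) < g(A \cup B) + g(A \cap B)$, equivalent to $f(A) + f(B) > f(A \cap B) + f(A \cup B)$, reduces to the one-line identity $xX + yY < xY + yX$, i.e.\ $(y-x)(X-Y) > 0$, which holds since $x < y$ and $X > Y$ strictly.

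Property~\ref{prop:a2} is easier and does not even require the lattice hypothesis: for $A \subsetneq B \subseteq \ev{a}$, I would pick $(e^*, v^*) \in B \setminus A$; then $A_{e^*} \subsetneq B_{e^*}$ together with $p_{e^*, v^*} > 0$ give $q_{e^*}(A) < q_{e^*}(B)$, while $q_e(A) \leq q_e(B)$ for every other edge. Since all factors $1 - q_e(\cdot)$ are positive, the product formula yields $g(A) > g(B)$, i.e.\ $f(A) < f(B)$. I expect the main subtlety to be the submodularity step: the lattice hypothesis is essential to guarantee per-edge nesting of $A_e$ and $B_e$, without which one would have to handle incomparable pairs and the clean $\min/\max$ identities above would break down; this is exactly why Property~\ref{prop:a1} is restricted to sets in $\lev{a}$ rather than arbitrary subsets of $\ev{a}$.
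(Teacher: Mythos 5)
Your proof is correct and follows essentially the same route as the paper's: establish the product formula $1-f(F)=\prod_e(1-q_e(F))$, use per-edge comparability of the lattice slices to get the $\min/\max$ identities, and close with the rearrangement inequality after splitting edges by the sign of $q_e(A)-q_e(B)$. The paper handles the $E_=$ edges by dividing out their (positive) factors rather than explicitly carving off $E_=$, but that is the same step in different clothing.
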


Similarly to the \qc setting, we now analyze the structure of the 
extreme points of polytope $\poipt{a}$.
We have the following lemma, which is a slightly different version of 
Lemma~\ref{lem:qc-chain} from Section~\ref{sec:qc-lp-structure}.

\begin{lemma}
\label{lem:poi-chain} 
Let $\mathbf{y}=(y_{e,v})_{(e,v) \in \ev{a}}$ be an extreme point of 
$\poipt{a}$ and let $Y = \{e \in \ev{a} : y_{e,v} > 0\}$ be the set of 
non-zero coordinates of $\mathbf{y}$.
Then there exist $|Y|$ subsets $S_1, \dots, S_{|Y|}$ of $\ev{a}$  such that
$S_1 \subsetneq S_2 \subsetneq \dots \subsetneq S_{|Y|}$ with the following 
properties:
\begin{enumerate}
\item \label{prop:b1} Constraint~\eqref{cons:poi-f} is tight for all 
$S_1, S_2, \dots S_{|Y|}$. That is $\sum_{(e,v) \in S_i} y_{e,v} = f(S_i)$ 
for all $i = 1, \dots, |Y|$.
\item \label{prop:b2} For each $i = 1, \dots, |Y|$, the set
$(S_i \setminus S_{i-1}) \cap Y$ contains exactly one element $(e_i, v_i)$.
Moreover, for any other $(e, w) \in S_i \setminus S_{i-1}$, we have $e = e_i$
and $w \geq v_i$.
\end{enumerate} 
\end{lemma}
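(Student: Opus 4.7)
The proof parallels that of Lemma~\ref{lem:qc-chain}, with extra care taken for the lattice family structure. The plan is to first establish that the sets tight for Constraint~\eqref{cons:poi-f} form a chain, then count to get exactly $|Y|$ of them, and finally identify the contents of each chain difference using a ``lattice closure'' argument.

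First, any two sets $A, B \in \lev{a}$ that are both tight for Constraint~\eqref{cons:poi-f} must be comparable. By the lattice family property, $A \cup B$ and $A \cap B$ also lie in $\lev{a}$, so Constraint~\eqref{cons:poi-f} applies to them. If neither $A \subseteq B$ nor $B \subseteq A$, then feasibility of $\mathbf{y}$ gives
\[
f(A) + f(B) = \sum_{(e,v) \in A} y_{e,v} + \sum_{(e,v) \in B} y_{e,v} = \sum_{(e,v) \in A \cup B} y_{e,v} + \sum_{(e,v) \in A \cap B} y_{e,v} \leq f(A \cup B) + f(A \cap B),
\]
contradicting the strict submodularity of $f$ on $\lev{a}$ established in Lemma~\ref{lem:poi-strict-submod}. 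So the tight sets in $\lev{a}$ form a chain $T_0 := \emptyset \subsetneq T_1 \subsetneq \cdots \subsetneq T_k$.

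Next, extreme-point counting pins down $k = |Y|$. For each $i$, strict increasingness of $f$ on $\lev{a}$ gives $\sum_{(e,v) \in T_i \setminus T_{i-1}} y_{e,v} = f(T_i) - f(T_{i-1}) > 0$, so every chain difference contains at least one coordinate of $Y$; hence $k \leq |Y|$. On the other hand, $\mathbf{y}$ being an extreme point of $\poipt{a}$ requires $|\ev{a}|$ linearly independent tight constraints; the tight non-negativity constraints contribute $|\ev{a}| - |Y|$, so at least $|Y|$ must come from the chain, giving $k \geq |Y|$. Combining, $k = |Y|$, and each $S_i \setminus S_{i-1}$ (setting $S_i := T_i$) contains exactly one element of $Y$, which I call $(e_i, v_i)$.

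Finally, to get the ``same edge with larger value'' property, let $T := S_{i-1} \cup \{(e_i, w) \in \ev{a} : w \geq v_i\}$, the smallest element of $\lev{a}$ containing $S_{i-1}$ together with $(e_i, v_i)$. Then $T \in \lev{a}$ and $T \subseteq S_i$. Since the only coordinate in $T \setminus S_{i-1}$ with positive $y$-value is $(e_i, v_i)$, we have $\sum_{(e,v) \in T} y_{e,v} = f(S_{i-1}) + y_{e_i, v_i} = f(S_i)$. If $T \subsetneq S_i$, then $f(T) < f(S_i)$ by strict increasingness, violating feasibility of $\mathbf{y}$ on $T$. Hence $T = S_i$, which is precisely the claimed property. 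The main obstacle is this third step: unlike the \qc case, the lattice family structure allows each chain difference to contain many edge-value pairs, and it takes a lattice-closure argument combined with strict monotonicity to rule out any pair not of the form $(e_i, w)$ with $w \geq v_i$.
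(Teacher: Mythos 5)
Your proof is correct and takes essentially the same approach as the paper: establishing the chain via strict submodularity on the lattice family, counting $|Y|$ tight constraints, and then comparing $S_i$ against the lattice closure $S_{i-1}\cup\{(e_i,w):w\geq v_i\}$ to pin down $S_i\setminus S_{i-1}$ using strict monotonicity. The paper delegates the first two steps to the proof of Lemma~\ref{lem:qc-chain} and only spells out the closure argument, which matches your $T$ (the paper calls it $S_i'$) almost verbatim.
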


\begin{proof}
Property~\ref{prop:b1} and the fact that each $(S_{i} \setminus S_{i-1}) 
\cap Y$ contains exactly one pair $(e_i, v_i)$ follows from the proof of 
Lemma~\ref{lem:qc-chain}.
It remains to show that each $S_{i} \setminus S_{i-1}$ additionally contains
only those edge-value pairs $(e_i, w)$ for which $w \geq v_i$.

Suppose to the contrary that there is some $S_i \setminus S_{i-1}$ that 
contains at least one other pair $(e',v')$ that violates this property.
Let $S'_{i} = S_{i-1} \cup \{(e_i, w): w \geq v_i\}$.
Then $S_{i-1} \subsetneq S'_{i} \subsetneq S_{i}$ and $S'_i$ is also in 
the family $\lev{a}$.
Thus we have that $f(S'_i) \geq \sum_{(e,v) \in S'_i}y_{e,v} = \sum_{(e,v) 
\in S_i}y_{e,v} = f(S_i),$ which is a contradiction because $f$ is strictly 
increasing and $S'_i \subsetneq S_i$.
Here the first inequality holds because $\mathbf{y}$ is in $\poipt{a}$ 
and the first equality holds because $S'_i$ contains all coordinates in $S_i$
for which $\mathbf{y}$ is non-zero. 
The last equality is true because $S_i$ corresponds to a tight constraint for 
the extreme point $\mathbf{y}$.
\end{proof}

As in the previous section, we are now ready to construct the distribution
$\poidist{a}$.
We present this explicit construction in the proof of 
Lemma~\ref{lem:poi-distributions} stated below, which is the counterpart of 
Lemma~\ref{lem:qc-distrbutions}. 

\begin{lemma} \label{lem:poi-distributions}
Let $\mathbf{x}^\ast$ be an optimal solution of $\poilp$.
For each vertex $a \in A$, there exist a distribution $\poidist{a}$ over the
permutations of (subsets of) edge-value pairs in $\ev{a}$ that 
satisfies the following properties:
\begin{enumerate}
\item \label{prop:c1}
  For each permutation $\sigma$ drawn from $\poidist{a}$, if edge-value 
  pair $(e,v)$ appears in $\sigma$, then the edge-value pair $(e, w)$ 
  appears before $(e,v)$ in $\sigma$ for all $w \in V_e$ 
  such that $w \geq v$,
\item \label{prop:c2}
  $\Pr[\text{Algorithm~\ref{alg:poi-query} outputs $e$}] = 
  \sum_{v \in V_e} x^\ast_{e,v}$ for all $e \in \adj{a}$, and
\item \label{prop:c3}
  $\sum_{v \in V_e: v \geq w} \Pr[\text{Algorithm~\ref{alg:poi-query} 
  	outputs $e$ with value $v$}]\cdot v \geq \sum_{v \in V_e: v \geq w} 
  	x^\ast_{e,v} \cdot v$ for all $e \in \adj{a}$ and $w \in \R^+$.
\end{enumerate}
Moreover, a permutation  of edge-value pairs from $\poidist{a}$ can be sampled 
in polynomial time.
\end{lemma}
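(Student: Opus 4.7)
My plan is to mirror the proof of \Cref{lem:qc-distrbutions}, replacing the $\qc$-chain by the richer PoI-chain structure from \Cref{lem:poi-chain}. First, I observe that the restriction $\mathbf{x}^\ast_a := (x^\ast_{e,v})_{(e,v) \in \ev{a}}$ lies in $\poipt{a}$, since the constraints defining $\poipt{a}$ (indexed by $F \in \lev{a}$) form a subset of the constraints of $\poilp$ restricted to subsets of $\ev{a}$. Because $\lev{a}$ is a lattice family and $f$ is submodular, we can minimize $f(F) - \sum_{(e,v) \in F} x_{e,v}$ over $F \in \lev{a}$ in polynomial time, exactly as in the separation oracle used for \Cref{lem:poi-solve-lp}, so linear functions over $\poipt{a}$ can also be optimized in polynomial time. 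The constructive Carath\'eodory theorem (Theorem~6.5.11 of \cite{Grotschel1988}) then provides a decomposition $\mathbf{x}^\ast_a = \sum_{i \in [k]} a_i \mathbf{y}^{(i)}$ into extreme points of $\poipt{a}$ with $k = \poly(|\ev{a}|)$, $a_i \ge 0$, and $\sum_i a_i = 1$. I define $\poidist{a}$ to return $\sigma_{\mathbf{y}^{(i)}}$ with probability $a_i$, where the permutation $\sigma_{\mathbf{y}}$ associated to an extreme point $\mathbf{y}$ is constructed next.

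Given an extreme point $\mathbf{y}$ with support $Y$, \Cref{lem:poi-chain} yields a chain $\emptyset = S_0 \subsetneq S_1 \subsetneq \cdots \subsetneq S_{|Y|}$ of tight sets in $\lev{a}$ with blocks $B_i := S_i \setminus S_{i-1}$; each $B_i$ contains pairs only for a single edge $e_i$, a unique $Y$-element $(e_i, v_i)$, and all other pairs satisfying $w \ge v_i$. I define $\sigma_{\mathbf{y}}$ by concatenating $B_1, B_2, \ldots, B_{|Y|}$ in order and listing the pairs inside each block in strictly decreasing order of value. The chain can be recovered in polynomial time: one shows $S_{|Y|}$ is forced to be the upward closure of $Y$ in $\ev{a}$ using strict monotonicity of $f$ (\Cref{lem:poi-strict-submod}) together with the tightness constraint for $S_{|Y|}$, and then iteratively recovers $S_{i-1}$ from $S_i$ by trying each candidate ``bottom chunk'' to remove while checking tightness. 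Property~1 now follows: the set of pairs appearing in $\sigma_{\mathbf{y}}$ equals $S_{|Y|}$, which is in $\lev{a}$ and therefore upward-closed in values, so whenever $(e, v) \in \sigma_{\mathbf{y}}$ every $(e, w)$ with $w \ge v$ is also present; and because the value ranges covered by successive blocks for a fixed edge $e$ are disjoint and decrease as the block index grows, all pairs for $e$ across $\sigma_{\mathbf{y}}$ appear in strictly decreasing order of value.

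For Property~2, view the algorithm as drawing $z_e \sim Y_e$ independently for each edge incident to $a$. The event that some pair among the first $|S_i|$ positions of $\sigma_{\mathbf{y}}$ matches its $z$-value is exactly $\{\exists (e,v) \in S_i : z_e = v\}$, which has probability $f(S_i)$; hence the probability that the output pair lies in $B_i$ equals $f(S_i) - f(S_{i-1}) = \sum_{(e,v) \in B_i} y_{e,v} = y_{e_i, v_i}$, using tightness of the chain and the fact that $(e_i, v_i)$ is the unique $Y$-element of $B_i$. Since $B_i$ only contains pairs for edge $e_i$, the output in this event is always edge $e_i$, so $\Pr[\text{Algorithm~\ref{alg:poi-query} outputs } e \mid \sigma_{\mathbf{y}}] = \sum_{v : y_{e,v} > 0} y_{e,v}$; averaging over $\poidist{a}$ yields Property~2.

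Property~3 is the most delicate step and the chief obstacle. The key observation is that every value appearing in block $B_i$ is at least $v_i$, so conditional on the output pair lying in $B_i$ the output value is at least $v_i$. Fixing an edge $e$ and threshold $w$, I compare both sides block-by-block over the blocks with $e_i = e$: if $v_i \ge w$, every value in $B_i$ is at least $v_i \ge w$, so $B_i$'s contribution to the LHS is at least $v_i \cdot \Pr[\text{output pair in } B_i] = v_i \cdot y_{e, v_i}$, which exactly matches the contribution of the pair $(e, v_i)$ to the RHS; if $v_i < w$, then $(e, v_i)$ contributes zero to the RHS while $B_i$'s LHS contribution is nonnegative. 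Summing these block-wise bounds over the blocks of edge $e$ and then averaging over the convex combination defining $\poidist{a}$ completes the proof.
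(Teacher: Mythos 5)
Your proposal is correct and follows essentially the same route as the paper: restrict $\mathbf{x}^\ast$ to $\ev{a}$, use lattice submodular minimization~\cite{Grotschel1981} as a separation oracle so that constructive Carath\'eodory applies to $\poipt{a}$, associate to each extreme point $\mathbf{y}$ the permutation obtained by concatenating the chain blocks of \Cref{lem:poi-chain} with each block listed in decreasing value, and verify Properties~\ref{prop:c1}--\ref{prop:c3} per extreme point before averaging. Your block-by-block case split ($v_i \geq w$ versus $v_i < w$) is just a more explicit phrasing of the paper's observation that within $S_i \setminus S_{i-1}$ the mass $y_{e_i,v_i}$ is redistributed onto pairs with values at least $v_i$; and your identification of $S_{|Y|}$ as the upward closure of $Y$ (via strict monotonicity) is a small additional detail the paper leaves implicit when it says to follow the lines of \Cref{lem:qc-distrbutions}.
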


\begin{proof}

As with the case of Lemma~\ref{lem:qc-distrbutions}, we first associate a
permutation of edge-value pairs with each extreme point of $\poipt{a}$.

For an extreme point $\mathbf{y}$, let $Y$ and $S_1,\ldots,S_{|Y|}$ be as defined
in~\Cref{lem:poi-chain}.  Consider the permutation $\sigma_y$ of elements in $S_{|Y|}$ that is defined
as follows: Start with $\sigma_y = [\,]$ and for each $i = 1, \dots, |Y|$,
append to it the edge-value pairs in $S_{i} \setminus S_{i-1}$ in the 
decreasing order of value. 
Recall that for all $i = 1, \dots, |Y|$, all edge-value pairs in $S_{i} 
\setminus S_{i-1}$ corresponds to a single edge.
Let $(e,v) \in S_{i}$. 
Then, by the definition of the family $\lev{u}$, $(e, v') \in S_{i}$ for all
$v' \in V_e$ such that $v' \geq v$.  
Thus none of the sets $S_{i+1} \setminus S_{i}, S_{i+2} \setminus S_{i+1},
\dots, S_{|Y|} \setminus S_{|Y| - 1}$ can contain an edge-value pair $(e, v')$ 
such that $v' > v$.  
Also, since the elements in $S_{i} \setminus S_{i-1}$ are appended to 
$\sigma_y$ in decreasing order of values, $\sigma_y$ has the following 
property: If at any point the edge-value pair $(e,v)$ appears in $\sigma_y$, 
then $(e,w)$ appears in $\sigma_y$ before $(e,v)$ for all $w \in V_e$ such 
that $w > v$.

Let $\sigma_y = (e_1, v_1), \dots, (e_\ell, v_\ell)$ be the permutation of 
edge value pairs associated with the extreme-point $\mathbf{y}$.  
Let $T_i := \{(e_1, v_1), \dots, (e_i, v_i) \}$ denote the set of first $i$
edge-value pairs in $\sigma_y$.  
If we select permutation $\sigma_y$ in 
Line~\ref{algstep:poi-select-permutation} in Algorithm~\ref{alg:poi-query}, 
the probability of it picking edge $e_i$ with value $v_i$ is exactly
$f(T_i) - f(T_{i-1})$.  
Now define a new vector $\mathbf{y'}$ with the same indices as $\mathbf{y}$ 
as follows: For each $(e_j, v_j) \in \sigma_y$, $y'_{e_j,v_j} 
= f(T_j) - f(T_{j-1})$, and all the other coordinates of $\mathbf{y'}$ 
are $0$.
Notice that $\mathbf{y'}$ and $\mathbf{y}$ satisfy the following:
\begin{enumerate}
\item \label{prop:d1} $\sum_{v \in V_e} y'_{e,v} = \sum_{v \in V_e} y_{e,v}$ 
for all $e \in \adj{a}$, and 
\item \label{prop:d2} $\sum_{v \in V_e: v \geq w} y'_{e,v} \cdot v \geq 
\sum_{v \in V_e: v \geq w} y_{e,v} \cdot v$ for all $e \in \adj{a}$ and 
$w \in \R^+$. 
\end{enumerate}
To see this fix some set $S_i$ and let $(e_{j},v_{j}), (e_{j+1}, v_{j+1}), 
\dots, (e_{k}, v_{k})$ be all the edge-value pairs in $S_{i} - S_{i-1}$.  
Then we know that $e_j = e_{j+1} = \dots = e_{k}$, $v_j > v_{j+1} > \cdots 
> v_{k}$, and $y_{e_k, v_k} \neq 0$.  
We thus have that $\sum_{j' = j}^k y'_{e_k,v_j} = f(T_{k}) - f(T_{j-1}) 
= f(S_i) - f(S_{i-1}) = y_{e_{k}, v_{k}}$ (recall that $(e_k, v_k)$ is the 
unique element in $S_i \setminus S_{i-1}$ for which $y_{e_k, v_k}$ is 
non-zero).  
This holds for elements in $S_i \setminus S_{i-1}$ in all $i = 1, \dots, |Y|$, 
and yields the Property~\ref{prop:d1} above.  
To see Property~\ref{prop:d2}, notice that within each $S_i \setminus 
S_{i-1}$, the weight of the non-zero coordinate $y_{e_k,v_k}$ is 
re-distributed among $y_{e_j,v_j}, \dots, y_{e_k,v_k}$, and that 
$v_{j'} \geq v_k$ for $j' = j, j+1, \dots, k$.

Let $\sigma$ be the random variable that denotes the permutation picked by
Algorithm \ref{alg:poi-query}. 
Then we have that, for all $e \in \delta(u)$,
$$\Pr[ \text{Algorithm \ref{alg:poi-query} picks $e$} | \sigma = \sigma_y] 
= \sum_{v \in V_e} y'_{e,v} = \sum_{v \in V_e} y_{e,v},$$ and for all
$(e,w) \in \ev{u}$,
$$\sum_{e, v \in V_e: v \geq w} \Pr[\text{Algorithm~\ref{alg:poi-query} 
picks $e$ with value $v$} | \sigma = \sigma_y] \cdot
v = \sum_{e, v \in V_e: v \geq w} y'_{e,v} \cdot v \geq \sum_{e, v \in V_e: v
\geq w} y_{e,v} \cdot v.$$
 
Now let $\mathbf{x}^\ast_a$ be the restriction of the optimal solution the 
coordinates in $\ev{a}$.
Since the constraints that define $\poipt{a}$ are only a subset of the
constraints of $\poilp$, $\mathbf{x}^\ast_a$ lies in $\poipt{a}$.
If we can optimize a linear function of $\poipt{a}$ in polynomial time,
then we can follow the same lines of the proof of 
Lemma~\ref{lem:qc-distrbutions} and use the constructive version of 
Caratheodary's theorem to find a convex combination $\mathbf{x}^\ast_a = 
\sum_{i \in [k]} a_i \cdot \mathbf{y}^{(i)}$, where $k = \poly(|\ev{a}])$
and for each $i \in [k]$, $\mathbf{y}^{(i)}$ is an extreme point of 
$\poipt{a}$.
However, unlike in the \qc case, the polytope $\poipt{a}$ only has 
constraints for sets of a lattice family, and as a result, we cannot 
use the usual submodular minimization as a separation oracle for $\poilp$.
But luckily, Gr{\"{o}}tschel et al.~\cite{Grotschel1981} showed that we can 
minimize any submodular function over a lattice family in polynomial time.
Thus we can efficiently find such a convex combination.

Once we have the convex combination, the rest is exactly the same as
the \qc setting. The distribution $\poidist{a}$ returns the permutation
$\sigma_{y^{(i)}}$ with probability $a_i$ for all $i \in [k]$.
One can easily verify that Properties~\ref{prop:c1}-\ref{prop:c3} hold for 
this distribution.
\end{proof}

\subsection{Proposed algorithm and analysis.}
\label{sec:poi-alg}

We now present a $(1 - 1/e)$-approximate PoI algorithm \approxpoi for \mwbm.

The algorithm closely resembles the algorithm \approxqc we presented
for the \qc model, but has two key differences.
First, we now have to pay a price for querying edges. 
But, as we prove later, this price is already taken care of by the way we
defined $Y_e$ variables.
The second difference is that for each edge $e \in E$, we now have multiple 
values to consider, but regardless, with respect to a fixed vertex 
(i.e. an item) $b \in B$, the vertices $a \in A$ can still be viewed as 
buyers; The appearance of multiple edge-value pairs for the same edge can be 
interpreted as a distribution over values that the buyer $a$ offers for the 
item $b$.

The outline of our algorithm is given in Algorithm \ref{alg:poi-matching}.
Note that, we again have the {\bf else} clause in the conditional to make sure 
that we do not change the probability distributions of the appearances of
edge-value pairs even if we decide not to query some edges for certain values.

\begin{algorithm}[H]
	\DontPrintSemicolon
	Solve $LP^\star$ to get $\mathbf{x}^\ast$ and find the permutation 
		distributions $\poidist{a}$ for all $a \in A$.\;
	For each vertex $a \in A$, select $t_a \in [0,1]$ (arrival time) 
		independently and uniformly at random.\;
	For each vertex $b \in B$, let $c_b = \sum_{(e,v) \in \ev{b}} x^\ast_{e,v} 
		\cdot  v.$\; \label{line:poi-matching:cbdef}
	Let $z_e = \operatorname{Null}$ for all $e \in E$. \;
	Let $M$ be an empty matching. \;
	\ForEach{vertex $a \in A$ in the increasing order of $t_a$}{
		Draw a permutation $\sigma$ from $\poidist{a}$. 
			\label{algstep:poi-draw-perm}\; 
		\ForEach{$(e = (a,b), v)$ in the order of $\sigma$}{
			\If{$v \geq (1 - e^{t_a - 1}) \cdot c_b$ {\bf and } 
					$b$ is not matched}{ 
				If $z_e = \operatorname{Null}$, pay $\pi_e$ and query edge $e$ to  
				find its actual value. Let $z_e$ be this value.
					\label{algstep:poi-query}\; 
				If $\min(z_e, \tau_e) = v$, add $e$ to $M$ and continue to next 
					vertex in 
					$A$. \label{algstep:poi-pick}\; 
			} \Else {
				If $z_e = \operatorname{Null}$, draw $z_e$ from a  distribution 
					identical to that of $X_e$. \label{algstep:poi-simquery}\; 
			  If $\min(z_e, \tau_e) = v$,  continue	to next vertex in $A$. 
			 		\label{algstep:poi-simpick}\; 
			} 
		} 
	}
	\KwRet the matching $M$.
	\caption{Outline of \approxpoi.}
	\label{alg:poi-matching}
\end{algorithm}

The analysis of the expected PoI utility of \approxpoi consists of two parts.
First we use the same technique used by Singla~\cite{Singla2018} to show that
we can analyze the expected utility using $Y_e$ variables and no query
costs instead of using $X_e$ variables with query costs. 
Next we reproduce almost the same analyis by Ehsani et al.~\cite{Ehsani2018} 
to prove the approximation guarantee.

Let $Z$ be the value we get from Algorithm~\ref{alg:poi-matching}.  
Then $Z = \sum_{e \in E} \left( \indpick{e} X_e - \indquery{e} \pi_e \right),$
where $\indpick{e}$ and $\indquery{e}$ are the indicator variables for the
events that edge $e$ is picked in Line~\ref{algstep:poi-pick} and it is 
queried in Line~\ref{algstep:poi-query} respectively.

Now suppose that we run an identical copy of Algorithm~\ref{alg:poi-matching}
in parallel but we do not pay for querying in Line~\ref{algstep:poi-query}, 
but instead of gaining $X_e$, we only get $Y_e = \min(X_e, \tau_e)$.  
We say that this latter execution in the ``free-information'' world whereas 
the original algorithm runs in the ``price-of-information'' world or PoI 
world. 
Let $Z'$ be the value we get in the free information world.  
We have the following lemma.

\begin{lemma} \label{lem:poi-vs-free}
The expected utility $\E[Z]$ of \approxpoi (which is the PoI world) is equal 
to the expected utility of its counterpart in the free information world.
\end{lemma}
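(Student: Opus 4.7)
The plan is to couple the two executions so that every randomness outcome produces the identical sequence of queries and picks in both worlds. This is possible because \approxpoi uses the queried value $X_e$ only through $\min(X_e,\tau_e)=Y_e$, so all branching decisions are driven entirely by the $Y_e$'s, the permutations, and the arrival times --- quantities that are the same in both worlds. Consequently $\indpick{e}$ and $\indquery{e}$ agree across the two worlds and
$$Z-Z'=\sum_{e\in E}\bigl[\indpick{e}(X_e-Y_e)-\indquery{e}\pi_e\bigr]=\sum_{e\in E}\bigl[\indpick{e}(X_e-\tau_e)^{+}-\indquery{e}\pi_e\bigr],$$
since $X_e-Y_e=(X_e-\tau_e)^{+}$. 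It will therefore suffice to prove $\E[\indpick{e}(X_e-\tau_e)^{+}]=\E[\indquery{e}]\,\pi_e$ for every edge $e$.

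The main structural ingredient I will invoke is Property~\ref{prop:c1} of \Cref{lem:poi-distributions}: in any permutation drawn from $\poidist{a}$, the set of values at which an edge $e$ appears is upward closed in $V_e$. Because $Y_e\leq\tau_e$ with equality having strictly positive probability, $\tau_e=\max V_e$, so whenever $e$ appears in the sampled permutation at all its very first occurrence is the pair $(e,\tau_e)$. From this I will deduce that whenever $\indquery{e}=1$ the payment $\pi_e$ is incurred precisely while processing $(e,\tau_e)$, and that $e$ is added to $M$ at this pair if and only if $\min(X_e,\tau_e)=\tau_e$, i.e., iff $X_e\geq\tau_e$.

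The remaining step is a pointwise identity. On $\{X_e\leq\tau_e\}$ we have $(X_e-\tau_e)^{+}=0$, so both $\indpick{e}(X_e-\tau_e)^{+}$ and $\indquery{e}(X_e-\tau_e)^{+}$ vanish. On $\{X_e>\tau_e\}$ I use the previous paragraph to conclude $\indpick{e}=\indquery{e}$: a pick can happen only inside the \textbf{if} branch, which is exactly the branch that triggers a query, and conversely, querying at $(e,\tau_e)$ forces a pick on this event since $Y_e=\tau_e$. Hence $\indpick{e}(X_e-\tau_e)^{+}=\indquery{e}(X_e-\tau_e)^{+}$ pointwise. Finally, $\indquery{e}$ is determined by the permutations, arrival times, and the matched status of $b$ at the start of vertex $a$'s iteration, none of which depend on $X_e$, so $\indquery{e}$ is independent of $X_e$. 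Combined with the defining identity $\E[(X_e-\tau_e)^{+}]=\pi_e$, this yields $\E[\indquery{e}(X_e-\tau_e)^{+}]=\E[\indquery{e}]\,\pi_e$, and summing over $e$ gives $\E[Z]=\E[Z']$. I anticipate the main obstacle to be the structural claim that the first appearance of any edge in a sampled permutation occurs at value $\tau_e$; without this, the payment $\pi_e$ would not be charged at exactly the value where the bonus $(X_e-\tau_e)^{+}$ is realized, and the telescoping cancellation would fail.
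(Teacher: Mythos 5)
Your proof is correct and follows essentially the same route as the paper: couple the two worlds, use Property~\ref{prop:c1} of Lemma~\ref{lem:poi-distributions} to show that the first (and hence the paying) occurrence of any edge $e$ is at $(e,\tau_e)$, and invoke the defining identity $\E[(X_e-\tau_e)^+]=\pi_e$. The only difference is presentational: you phrase the cancellation as the pointwise identity $\indpick{e}(X_e-\tau_e)^+=\indquery{e}(X_e-\tau_e)^+$ together with the explicit observation that $\indquery{e}$ is independent of $X_e$, whereas the paper reasons about the conditional expected increase per query event and uses that independence implicitly when it evaluates the integral $\int_{\tau_e}^{\infty} t\, p_e(t)\,dt$.
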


\begin{proof}
Consider a case where the algorithm picks an edge that is already queried 
before.
In this case, both algorithms get the same value, so the expected increase to
$Z$ and $Z'$ are the same.

Now consider the case where both algorithms query for some edge $e$. 
Notice that if an edge $e$ is queried at any point, it is queried for the 
edge-value pair $(e, \tau_e)$.  
This is because $\tau_e$ is the maximum possible value for an edge $e$, and 
as a result, $(e, \tau_e)$ appears before any other $(e,v)$ for $v < \tau_e$ 
(if it appears at all) in the permutations chosen in 
Line~\ref{algstep:poi-draw-perm} of Algorithm \ref{alg:poi-matching}.  
In this case, the expected increase to $Z'$ in the free information world is
$\tau_e \cdot \Pr[X_e \geq \tau_e]$.  
The expected increase to $Z$ in the PoI world is
\begin{align*}
- \pi_e + \int_{\tau_e}^{\infty} t \cdot p_e(t) dt
&= -\pi_e + \int_{\tau_e}^{\infty} (t - \tau_e) \cdot p_e(t)\, dt +
  \int_{\tau_e}^{\infty} \tau_e\cdot p_e(t) \, dt \\
&= \underbrace{- \pi_e + \E[ (X_e - \tau_e)^+]}_{0} + \tau_e \cdot 
  \Pr[X_e \geq \tau_e].
\end{align*}
\end{proof}

To conclude our analysis, we now present following theorem on the
approximation guarantee.

\begin{theorem} 
  \label{thm:poiutility}
The expected PoI utility $\poiutil(\approxpoi)$ of \approxpoi is at least 
$(1 - 1/e) \cdot \opt$.
\end{theorem}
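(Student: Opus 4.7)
The plan is to carry out two reductions and then invoke the prophet-secretary analysis of Ehsani et al.~\cite{Ehsani2018}. First, by Lemma~\ref{lem:poi-vs-free} the expected PoI utility of \approxpoi equals the expected utility of its free-information counterpart, in which no query cost is paid and matching $b$ via edge $e$ at the fired value $v=\min(X_e,\tau_e)$ yields reward exactly $v$. Second, Lemma~\ref{lem:poi-lp-ub} together with the definition of $c_b$ in Line~\ref{line:poi-matching:cbdef} of Algorithm~\ref{alg:poi-matching} gives $\opt\le\sum_{(e,v)\in\allev} x^\ast_{e,v}\cdot v = \sum_{b\in B} c_b$. Hence, by linearity of expectation, it suffices to show that each fixed $b\in B$ contributes in expectation at least $(1-1/e)\,c_b$ to the free-information utility.

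Fix such a $b$. To cast the behaviour at $b$ as a prophet-secretary instance, observe that each $a\in A$ arrives at a uniformly random time $t_a\in[0,1]$ and, by Lemma~\ref{lem:poi-distributions}(\ref{prop:c2}), the permutation drawn from $\poidist{a}$ causes at most one edge-value pair incident to $a$ to ``fire''. Moreover, by Lemma~\ref{lem:poi-distributions}(\ref{prop:c3}), the tail of the fired value for edge $(a,b)$ satisfies
\[
\sum_{v\ge w}\Pr\!\big[\text{$(a,b)$ fires with value $v$}\big]\cdot v \;\ge\; \sum_{v\ge w} x^\ast_{(a,b),v}\cdot v \qquad\text{for every }w\ge 0,
\]
and $b$ collects value $v$ exactly when $v\ge L(t_a):=(1-e^{t_a-1})c_b$ and $b$ is still free.

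Letting $q(t)=\Pr[b\text{ is unmatched at time }t]$, the contribution of $b$ to the free-information utility equals
\[
\int_0^1 q(t)\cdot S(t)\,dt, \qquad S(t):=\sum_{a\in A}\sum_{v\ge L(t)}\Pr\!\big[\text{$(a,b)$ fires with value $v$}\big]\cdot v,
\]
and, since the $t_a$ are i.i.d.\ uniform on $[0,1]$, $q$ obeys the differential equation $q'(t)=-q(t)\cdot T(t)$ with $T(t):=\sum_{a\in A}\sum_{v\ge L(t)}\Pr[\text{$(a,b)$ fires with value }v]$. From here I would follow the prophet-secretary argument of Ehsani et al., the only change being that the ``base price'' is $c_b$ rather than the expected maximum offer. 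The two ingredients one needs are the displayed inequality above (which, at $w=0$, gives $S(0)\ge c_b$, and, at $w=L(t)$, decomposes $S(t)$ into a threshold term $L(t)T(t)$ plus a nonnegative residual) and the fact that the decreasing threshold forces $q(t)\ge e^{t-1}$; combining these via integration by parts (or equivalently by comparison with the extremal solution $q(t)=e^{t-1}$) yields $\int_0^1 q(t)S(t)\,dt\ge (1-1/e)\,c_b$, which suffices.

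The main obstacle is the usual one in prophet-secretary proofs: $q(t)$ is not directly computable because it depends, through the other $b'\in B$, on the entire history of the algorithm, and the offers made to $b$ are correlated because each $a$ can fire towards at most one neighbour. The remedy, inherited from Ehsani et al., is to bypass these correlations by reasoning only about the scalar function $q(t)$ through its differential equation, which depends only on $b$'s own threshold and the marginal firing distributions supplied by Lemma~\ref{lem:poi-distributions}; once that is set up, the remaining calculation is routine.
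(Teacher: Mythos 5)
Your high-level plan matches the paper's: reduce via Lemmas~\ref{lem:poi-vs-free} and~\ref{lem:poi-lp-ub} to showing each $b\in B$ earns at least $(1-1/e)\,c_b$ in the free-information world, then run a prophet-secretary argument per $b$ with base price $c_b$. Your ODE framing $q'(t)=-q(t)T(t)$ is a valid alternative to the paper's explicit decomposition $Z'_{a,b}=M_{a,b}+N_{a,b}$; after integration by parts, both presentations reduce to the same calculation.

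The problem is the claimed ingredient $q(t)\ge e^{t-1}$, which is false: at $t=1$ it would force $q(1)\ge 1$, hence $q\equiv 1$, i.e.\ $b$ is never matched, which clearly fails whenever some acceptable offer arrives with positive probability. No pointwise bound on the survival probability appears in Ehsani et al.\ or in the paper. What actually closes the argument is the identity $1-\alpha(t)+\alpha'(t)\equiv 0$ for $\alpha(t)=1-e^{t-1}$: after integration by parts the $q$-dependent integrals cancel \emph{identically}, regardless of what $q$ is. In the paper's notation,
$\E\left[\sum_a N_{a,b}\right]=c_b\left((1-1/e)+\int_0^1 r\,\alpha'\right)$
and
$\E\left[\sum_a M_{a,b}\right]\ge c_b\int_0^1 r\,(1-\alpha)$,
and adding these the integrals vanish because $(1-\alpha)+\alpha'=0$. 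You should replace ``comparison with the extremal solution $q(t)=e^{t-1}$'' with this cancellation.

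A second gap: ``a nonnegative residual'' is not enough. You need the residual $S(t)-L(t)T(t)$ to be at least $(1-\alpha(t))\,c_b$, which requires three things: the tail inequality of Lemma~\ref{lem:poi-distributions}(\ref{prop:c3}) at $w=L(t)$, the observation that terms with $v<L(t)$ satisfy $v-L(t)<0$ so the sum can be extended over all $v$, and the LP constraint $\sum_{(e,v)\in\ev{b}}x^\ast_{e,v}\le f(\ev{b})\le 1$ to control the subtracted term $L(t)\sum_{a,v}x^\ast_{(a,b),v}$. This last constraint is where the structure of $\poilp$ enters the prophet-secretary calculation, and it is the step your sketch omits.
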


\begin{proof}
By Lemma~\ref{lem:poi-vs-free}, we have $\poiutil(\approxpoi) = \E[Z] = 
\E[Z']$.
Since the optimal value of $\poilp$ is an upper bound on the 
optimal value $\opt$ (by Lemma~\ref{lem:poi-lp-ub}), it is sufficient to 
show that
$$\E[Z'] \geq (1 - 1/e) \cdot \sum_{(e,v) \in \allev} x^\ast_{e,v} v = 
(1 - 1/e) \cdot \sum_{b \in B} c_b.$$ 
(Recall that $c_b = \sum_{(e,v) \in \ev{b}} x^\ast_{u,v} \cdot v$ as defined
on~\Cref{line:poi-matching:cbdef} in~\Cref{alg:poi-matching}.)
Let $Z'_b = \sum_{a \in \adj{b}} \indpick{a,b} Y_{a,b}$ so that
$Z' = \sum_{b \in B} Z'_b$.  
We show that for any $b \in B$, $\E[Z'_b] \geq (1 - 1/e) \cdot c_b$.  
Then the theorem follows from the linearity of expectation. The following calculations now follow the analysis in~\cite{Ehsani2018} and are included for completeness.

We proceed by splitting $Z'_{a,b}$ into two parts:
$$Z'_{a,b} = \underbrace{\indpick{a,b} (Y_{a,b} - 
(1 - e^{t_a-1}) \cdot c_b)}_{M_{a,b}} + \underbrace{\indpick{a,b} 
(1 - e^{t_a-1}) \cdot c_b}_{N_{a,b}}.$$ 
Define $r(t) := \Pr[\text{no edge incident to $b$ is picked before time 
$t$}]$ and $\alpha(t):= 1 - e^{t-1}$. 
Notice that $r(t)$ is decreasing, and since our $t_a$'s are from a continuous 
distribution, $r(t)$ is a differentiable function. 
Thus we have
\begin{align*}
\E\left[\sum_{a \in \adj{b}} N_{a,b}\right]
&=  - \int_{0}^{1} r'(t) \cdot (1 - 1/e^{t-1})  \cdot c_b \, dt = -  c_b 
  \int_{0}^{1} r'(t) \cdot \alpha(t)\,dt. \\
\intertext{By applying integration by parts,}
\E\left[\sum_{a \in \adj{b}} N_{a,b}\right]
&= - c_b \left( \left[ r(t) \cdot \alpha(t)  \right]_0^1 - \int_0^1 r(t) 
	\cdot \alpha'(t)\,dt \right)\\
&= c_b \left( (1 - 1/e) + \int_0^1 r(t) \cdot \alpha'(t)\,dt \right). 
\addtocounter{equation}{1}\tag{\theequation} \label{eq:rev}
\end{align*}
Now we consider the expectation of $M_{a,b}$.  
Note that the inequality below is due to the third property of the  
distributions $\poidist{a}$ of edge-value pairs we used in the algorithm
(see Lemma~\ref{lem:poi-distributions}).
\begin{align*}
\E[M_{a,b} | t_a = t] &\geq \Pr[\text{no edge incident to $b$ is picked 
before $t$} | t_a = t] 
\sum_{ \substack{v \in V_{a,b} \\ v \geq \alpha(t) \cdot c_b}} 
x^\ast_{(a,b),v} 
(v - \alpha(t) \cdot c_b ).
\end{align*}
 
If $t_a = t$, this means that $t_a$ could not have arrived before $t$.  
Hence
$\Pr[$no edge incident to $b$ is picked before $t$ $ | t_a = t] \geq 
\Pr[$no edge incident to $b$ is picked before $t$ $]$,
and thus we have
\begin{align*}
\sum_{a \in \delta(b)} &\E[M_{a,b} |t_a = t] \\
&\geq \Pr[\text{no edge incident to $b$ is picked before $t$} | t_a = t] 
	\sum_{a \in \adj{b}} \sum_{\substack{v \in V_{a,b} \\ v \geq \alpha(t) 
	\cdot c_b}} 
	x^\ast_{(a,b),v} (v - \alpha(t) \cdot c_b ) \\
&\geq \Pr[\text{no edge incident to $b$ is picked before $t$}] 
 	\sum_{a \in \adj{b}} \sum_{\substack{v \in V_{a,b}\\ v \geq \alpha(t) 
 	\cdot c_b}} 
  x^\ast_{(a,b),v} (v - \alpha(t) \cdot c_b ) \\
&\geq r(t) \sum_{a \in \adj{b}} \sum_{v \in V_{a,b} } x^\ast_{(a,b),v} 
    (v - \alpha(t) \cdot c_b ) \\
&= r(t) \left(c_b -  \alpha(t) \cdot c_b \sum_{a \in \adj{b}} \sum_{v \in 
V_{a,b}}   x^\ast_{(a,b),v} \right) \\
&\geq  r(t) (1 - \alpha(t)) \cdot c_b.
\end{align*}
Since $t_a$ is uniformly distributed over $[0,1]$ for each $a \in A$, 
it follows that
\begin{align}
\E\left[\sum_{a \in \adj{b}} M_{a,b} \right]
& = \int_0^1 \sum_{a \in \adj{b}} \E[ M_{a,b} | t_a = t] \, dt \geq  
  c_b \int_0^1 r(t) \left( 1 - \alpha(t) \right)\, dt. 
  \addtocounter{equation}{1} \tag{\theequation} \label{eq:ut}
\end{align}
Now $\eqref{eq:rev} + \eqref{eq:ut}$ yields
\begin{align*}
\E \left[ \sum_{a \in \adj{b}} Z'_{a,b} \right]
&= \E\left[\sum_{a \in \adj{b}} N_{a,b} \right] + \E \left[ 
	\sum_{a \in \adj{b}} M_{ij} \right] \\
&\geq c_b \left((1 - 1/e) + \int_0^1 r(t) \cdot \alpha'(t)\,dt \right) 
	+ c_b \int_0^1 r(t) \left( 1 - \alpha(t) \right)   \, dt \\
&= c_b  (1 - 1/e) +  c_b  \int_0^1 r(t) \underbrace{\left(1 - \alpha(t) +
  \alpha'(t) \right)}_{0}   \, dt \\
  &= (1 - 1/e) \cdot c_b.
\end{align*}

\end{proof}

\section*{Acknowledgements} We are grateful to Anupam Gupta and Amit Kumar
for influential discussions at an early stage of this work.

\bibliographystyle{alpha}
\bibliography{ref}

\appendix
\section{Proofs of Some Supplementary Results}
\label{app:proofs}

\begin{lemma}[Lemma~\ref{lem:poi-strict-submod}] 
\label{lem:poi-strict-submod-proof}
Fix a vertex $a \in A$.
If $p_{e,v} > 0$ for all $(e,v) \in \ev{a}$ and
$\sum_{v \in V_e} p_{e,v} < 1$ for all $e \in \adj{a}$, the function $f$ 
is strictly submodular and strictly increasing on the lattice family 
$\lev{a}$.
Formally,
\begin{enumerate}
\item \label{prop:aa1} For any $A, B \in \lev{a}$ such that $A \setminus B 
\neq \emptyset$ and $B \setminus A \neq \emptyset$, $f(A) + f(B) > f(A \cap B) 
+ f(A \cup B)$, and
\item \label{prop:aa2} For any $A \subsetneq B \subseteq \ev{a}$, 
$f(B) > f(A)$.
\end{enumerate}
\end{lemma}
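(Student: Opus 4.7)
The plan is to mimic the proof of \Cref{lem:qc-strict-submod} as closely as possible, contributing only one new ingredient: the lattice structure of $\lev{a}$ forces a clean multiplicative identity from which strict submodularity then drops out by the same $(1-a)(1-b)>0$ trick used in the \qc setting. Throughout, I will work with the complement
$$g(F) \;:=\; 1 - f(F) \;=\; \prod_{e \in \adj{a}}\bigl(1 - q_e(F)\bigr), \qquad q_e(F) \;:=\; \sum_{v \,:\, (e,v) \in F} p_{e,v}.$$
The assumption $\sum_{v \in V_e} p_{e,v} < 1$ guarantees $q_e(F) < 1$, so every factor is strictly positive and all divisions by $g$-quantities below are legitimate.

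Property~\ref{prop:aa2} (strict monotonicity) is immediate from this product form. If $A \subsetneq B \subseteq \ev{a}$, pick $(e^\star, v^\star) \in B \setminus A$; since $p_{e^\star, v^\star} > 0$ we get $q_{e^\star}(B) > q_{e^\star}(A)$ strictly, while $q_e(B) \ge q_e(A)$ for every other edge $e$. Every factor of $g$ is positive, so $g(B) < g(A)$ and hence $f(B) > f(A)$.

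The heart of Property~\ref{prop:aa1} is the following product identity, which replaces the edge-disjoint factorization used in the \qc proof. For $A, B \in \lev{a}$ and each edge $e \in \adj{a}$, define $A_e := \{v : (e,v) \in A\}$ and $B_e := \{v : (e,v) \in B\}$. By the definition of the lattice family $\lev{a}$, both $A_e$ and $B_e$ are upward-closed subsets of $V_e$, so they are nested: $A_e \subseteq B_e$ or $B_e \subseteq A_e$. Consequently $q_e(A \cap B) = \min\bigl(q_e(A), q_e(B)\bigr)$ and $q_e(A \cup B) = \max\bigl(q_e(A), q_e(B)\bigr)$, giving
$$\bigl(1 - q_e(A)\bigr)\bigl(1 - q_e(B)\bigr) \;=\; \bigl(1 - q_e(A \cap B)\bigr)\bigl(1 - q_e(A \cup B)\bigr)$$
for every $e$; taking the product over $e \in \adj{a}$ yields the key identity $g(A)\,g(B) = g(A \cap B)\,g(A \cup B)$. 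With this in hand I can close exactly as in the \qc case. Put $a := g(A)/g(A \cap B)$ and $b := g(B)/g(A \cap B)$, so the identity rewrites as $g(A \cup B) = a\,b\cdot g(A \cap B)$. Because $A \setminus B \neq \emptyset$ gives $A \cap B \subsetneq A$, Property~\ref{prop:aa2} (already proven) yields $g(A) < g(A \cap B)$, i.e.\ $a < 1$; symmetrically $b < 1$. Then $(1-a)(1-b) > 0$, i.e.\ $a+b < 1 + ab$, so
$$g(A) + g(B) \;=\; g(A \cap B)(a + b) \;<\; g(A \cap B)(1 + ab) \;=\; g(A \cap B) + g(A \cup B),$$
which negates to $f(A) + f(B) > f(A \cap B) + f(A \cup B)$.

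The only conceptual obstacle is recognizing that the lattice hypothesis is precisely what promotes submodularity to strict submodularity via the product identity: for arbitrary $A, B \subseteq \ev{a}$ one would only have $q_e(A \cap B) \le \min$ and $q_e(A \cup B) \ge \max$ edgewise, leading to $g(A)\,g(B) \le g(A \cap B)\,g(A \cup B)$ and merely the usual (non-strict) submodularity. The closure of $\lev{a}$ under $\cap$ and $\cup$ combined with upward-closedness of each $A_e$ is exactly what upgrades these inequalities to equalities on every edge, and this is what the proof of~\Cref{lem:poi-chain} requires.
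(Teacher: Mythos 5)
Your proof is correct, and it takes a genuinely different route from the paper's. Both start from the product form $g(F) = 1 - f(F)$ and the observation that the lattice structure of $\lev{a}$ forces the edgewise values to nest, so that $q_e(A\cap B) = \min(q_e(A), q_e(B))$ and $q_e(A\cup B) = \max(q_e(A), q_e(B))$. From there the paper takes a combinatorial detour: it partitions the edges into $E_1 = \{e : a_e < b_e\}$ and $E_2 = \{e : a_e > b_e\}$ (arguing both are nonempty and dividing out equal factors), and closes with the rearrangement inequality applied to $\prod_{E_1}$ and $\prod_{E_2}$. You instead extract the clean multiplicative identity $g(A)\,g(B) = g(A\cap B)\,g(A\cup B)$ --- which holds because on every edge the pair $\{q_e(A), q_e(B)\}$ coincides as a multiset with $\{q_e(A\cap B), q_e(A\cup B)\}$ --- normalize by $g(A\cap B)$ to get $a, b < 1$ via the already-established strict monotonicity, and finish with the same $(1-a)(1-b) > 0$ trick used in Lemma~\ref{lem:qc-strict-submod}. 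Your version makes the PoI case a near-verbatim repeat of the \qc case, which is cleaner: it isolates the one genuinely new fact (the multiplicative identity forced by the lattice hypothesis) and reuses everything else, whereas the paper's rearrangement argument, while correct, obscures the parallel with the \qc setting. Your closing remark --- that for general $A,B$ one would only get edgewise inequalities and hence ordinary, non-strict submodularity --- is also accurate and pinpoints exactly where the lattice hypothesis earns its keep.
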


\begin{proof}
It is easy to see that  $f(F) = 1 - \prod_{e \in E} (1 - \sum_{v \in V_e : 
(e,v) \in F} p_{e,v}) $.
Consider two sets $A, B \in \lev{a}$ such that $A \setminus B \neq \emptyset$
and $B \setminus A \neq \emptyset$.
Fore each $e \in E$, let $a_e = 1 - \sum_{v \in V_e: (e,v) \in A} p_{e,v}$
and $b_e = 1 - \sum_{v \in V_e: (e,v) \in B} p_{e,v}$.
Thus we have that $f(A) = 1 - \prod_{e \in E}a_e$, 
$f(B) = 1 - \prod_{e \in E} b_e$, $f(A \cup B) = 1 - \prod_{e \in E}
\min(a_e, b_e)$, and $f(A \cap B) = 1 - \prod_{e \in E} \max(a_e, b_e)$.
The last two equations follow from the definition of the family $\lev{a}$.

Now we have
\begin{align}
f(A) + f(B) - f(A \cup B) - f(A \cap B) 
&= \prod_{e \in E} \min(a_e, b_e) + \prod_{e \in E} \max(a_e, b_e)  - 
\prod_{e \in E} a_e - \prod_{e \in E} b_e. \label{eq:diff}
\end{align}
Thus to prove Property~\ref{prop:aa1}, it is sufficient to prove that the 
right hand side of \eqref{eq:diff} is strictly greater than zero,
which is equivalent to showing that
$\prod_{e \in E} \min(a_e, b_e) + \prod_{e \in E} \max(a_e, b_e)  > 
\prod_{e \in E} a_e + \prod_{e \in E} b_e$.

Since $A \setminus B \neq \emptyset$, we have $a_e < b_e$ for at least one 
edge $e_1 \in E$.
To see this, let $(e,v) \in A \setminus B$.
Then for such edge $e$, it follows from the definition of set family
$\lev{a}$ that the set $A_e = A \cap \{(e,w): w \in V_e\}$ contains all the 
elements in the set $B_e = B \cap \{(e,w): w \in V_e\}$, and in addition, 
it also contains at least one more element, namely $(e,v)$. 
Since $p_{e,v} > 0$, $a_e < b_e$.
Similarly, we have $a_e > b_e$ for at least one edge $e_2 \in E$ 
(since we assumed that $p_{e,v} > 0$ for all $(e,v) \in \ev{a}$).
Let $E_1$ be the (nonempty) set of edges for which $a_e < b_e$, and let $E_2$ 
be the (nonempty) set of edges for which $a_e > b_e$.
Without loss of generality we assume that $E_1 \cup E_2 = E$
(if $a_e = b_e$ for some $e$, then we can divide \eqref{eq:diff} by $a_e$
since $\sum_{v \in V_e} p_{e,v} < 1$, $a_e > 0$).
We then have
\begin{align*}
\prod_{e \in E} \min(a_e, b_e) + \prod_{e \in E} \max(a_e, b_e) 
&=  \prod_{e \in E_2} b_e \prod_{e \in E_1} a_e + 
		\prod_{e \in E_2} a_e \prod_{e \in E_1} b_e \\
&> \prod_{e \in E_2} a_e \prod_{e \in E_1} a_e + 
	 \prod_{e \in E_2} b_e \prod_{e \in E_1} b_e \\
&= \prod_{e \in E} a_e + \prod_{e \in E} b_e
\end{align*}
as required. 
The inequality above follows from the rearrangement inequality as 
$\prod_{e \in E_1} b_e > \prod_{e \in E_1} a_e$ and 
$\prod_{e \in E_2} a_e > \prod_{e \in E_2} b_e$.

As for Property~\ref{prop:aa2}, suppose that $A \subsetneq B$.
Then for all $e \in E$, $a_e \geq b_e$, and for at least one $e \in E$,
$a_e > b_e$. 
Hence  $f(B) - f(A) =  \prod_{e \in E} a_e - \prod_{e \in E} b_e > 0$.
\end{proof}

\end{document}